\newcommand{\PP}{{\mathcal P}}
\newcommand{\E}{{\mathcal E}}
\newcommand{\SSS}{{\mathcal S}}
\newcommand{\I}{{\mathcal I}}
\newcommand{\J}{{\mathcal J}}
\newcommand{\R}{{\mathbb R}}
\newcommand{\N}{{\mathbb N}}
\renewcommand{\eqref}[1]{(\ref{#1})}
\newcommand{\supp}{{\rm supp}}
\newcommand{\Dom}{{\rm Dom}\,}
\newtheorem{prop}{Proposition}[section]
\newtheorem{lemma}[prop]{Lemma}
\newtheorem{defi}{Definition}[section]
\newtheorem{coro}[prop]{Corollary}
\newtheorem{theo}[prop]{Theorem}
\newtheorem{rmk}[prop]{Remark}
\begin{document}

\setcounter{figure}{0}

\title[One-dimensional wave equations defined by fractal Laplacians]
{One-dimensional wave equations defined by fractal Laplacians}

\date{\today}

\author[J. F.-C. Chan]{John Fun-Choi Chan}
\address{Department of Mathematical Sciences\\ Georgia Southern
University\\ Statesboro, GA 30460-8093, USA.} 
\email{fchan@udel.edu}

\author[S.-M. Ngai]{Sze-Man Ngai}
\address{Department of Mathematical Sciences\\ Georgia Southern
University\\ Statesboro, GA 30460-8093, USA, and College of Mathematics and Computer Science, Hunan Normal University,
Changsha, Hunan 410081, China.}
\email{smngai@georgiasouthern.edu}

\author[A. Teplyaev]{Alexander Teplyaev}
\address{Department of Mathematics\\
University of Connecticut\\
Storrs, CT 06269, USA.}
\email{alexander.teplyaev@uconn.edu}

\subjclass[2010]{Primary: 28A80, 34L16, 65L60, 65L20; Secondary: 34L15, 65L15}
\keywords{fractal, iterated function system, self-similar measure, Laplacian, wave equation, finite element method}

\thanks{The first two authors were supported in part by a Faculty Research Grant from Georgia Southern University. The second author is also supported in part by an HKRGC grant and
National Natural Science Foundation of China grant 11271122.
The third author is supported in part by  NSF grant DMS-0505622.}

\begin{abstract}
We study one-dimensional wave equations defined by a class of fractal Laplacians.
These Laplacians are defined by fractal measures generated by iterated function systems with overlaps, such as the well-known infinite Bernoulli convolution associated with the golden ratio and the 3-fold convolution of the Cantor measure. The iterated function systems defining these measures do not satisfy the  post-critically finite condition or the open set condition. By using second-order self-similar identities introduced by Strichartz et al., we discretize the equations and use the finite element and central difference methods to obtain numerical approximations to the weak solutions. We prove that the numerical solutions converge to the weak solution, and obtain estimates for the rate of convergence. \tableofcontents
\end{abstract}

\maketitle

\section{Introduction}\label{S:intro}
\setcounter{equation}{0}

In this paper we study approximations to the solution of the wave equation defined by a one-dimensional fractal measure. Such fractals have recently attracted considerable attention because of their relation to classical analysis on one hand, and having many unusual properties on the other hand. In such situations classical approximation methods have to be modified to produce accurate results, see \cite[and references therein]
{St2004num,MV,St2000sp, Meyers-Strichartz-Teplyaev_2004}. In this paper we investigate the solution of the wave equation theoretically, and also provide numerical examples.

Our long term goal is to combine ideas of Strichartz,
including the celebrated Strichartz estimates, with some recent results,
such as \cite{HSTZ} and \cite{Strichartz-Teplyaev_2012},
in a comprehensive study of wave equations on fractals and fractafolds.
 However currently there are few  mathematical tools developed to study wave equations on fractals, despite the fact that
the existence of large gaps in the spectrum on many fractals, together with heat kernel estimates, implies that Fourier series on these fractals can have better convergence than in the classical case (and, as was noted by Strichartz in \cite{St}, \textquotedblleft ... is the first kind of example which improves on the corresponding results in smooth analysis\textquotedblright).
Among the most recent results, the infinite wave prorogation speed was recently proved
on some \textit{post-critically finite (p.c.f.)} (see \cite{Kigami_2001})  fractals in the preprint \cite{Lee} by Yin-Tat Lee. This question was open, even in the most standard case of the Sierpi\'nski gasket,
since 1999, see \cite{DSV, Strichartz_1999}.
The proof in \cite{Lee} relies partially on
the Kigami's p.c.f. assumptions (see \cite{Kigami_2001} and references therein),
and more substantially on certain heat kernel estimates.
In general, the heat kernel estimates on fractals is a difficult and extensively studied subject,
with most relevant recent results and references contained in   \cite{GT12,GT-cjm,Ki09,Ki12}.
It is not clear at present if the heat kernel
estimates assumed in \cite{Lee}  can be verified for fractal measures that we consider,
but some preliminary results can be found in \cite{PLYung}.
The most intuitive idea, essentially due to Strichartz, is that there is no reason why the wave propagation speed should be finite on fractals, because of the difference in time and Laplacian scalings.
In our paper we do not discuss the wave propagation speed directly, but rather develop approximating tools that may help in this study.

Let $\mu$ be a continuous positive finite Borel measure on $\R$ with $\text{supp}(\mu)\subseteq [a,b]$. Let $L^2_{\mu}[a,b]:=\{f:[a,b]\to\R:\int|f|^2\,d\mu<\infty\}$; if $\mu$ is Lebesgue measure, we simply denote the space by $L^2[a,b]$. It is well known (see e.g., \cite{Bird-Ngai-Teplyaev_2003, Hu-Lau-Ngai_2006}) that $\mu$ defines a Dirichlet Laplacian $\Delta_\mu$ on $L^2_{\mu}[a,b]=\{f:[a,b]\to\R:\int|f|^2\,d\mu<\infty\}$, described as follows. Let $H^1(a,b)$ be the Sobolev space of all functions in $L^2[a,b]$ whose weak derivatives  belong to $L^2[a,b]$, with the inner product
$$
(u,v)_{H^1(a,b)}:=\int_a^buv\,dx+\int_a^b u'v'\,dx.
$$
Let $H_0^1(a,b)$ be the completion of $C_c^\infty(a,b)$ in $H^1(a,b)$. $H_0^1(a,b)$ and $H^1(a,b)$ are dense subspaces of $L^2_{\mu}[a,b]$.
Define a quadratic form on $L^2_{\mu}[a,b]$,
\begin{equation}\label{eq:quadratic-form}
\E(u,v)=\int_a^bu'v'\,dx,
\end{equation}
with domain $\Dom\E$ equal to some dense subspace of $H_0^1(a,b)$ (see Section~\ref{S:prelim}). Since the embedding $\Dom\E\hookrightarrow L^2_{\mu}[a,b]$ is compact, $\E$ is closed and is in fact a Dirichlet form on $L^2_{\mu}[a,b]$. Thus there exists a nonnegative self-adjoint operator $T$ on $L^2_{\mu}[a,b]$ such that $\Dom\E=\Dom(T^{1/2})$ and
$$
\E(u,v)=(T^{1/2}u,T^{1/2}v)_{\mu}\qquad \text{for all } u,v\in\Dom\E,
$$
where
$$
(u,v)_\mu:=\int_a^b uv\,d\mu
$$
denotes the inner product on $L^2_{\mu}[a,b]$; we will also let $\|\cdot\|_\mu$ denote the corresponding norm. We define $\Delta_\mu:=-T$ and call it the {\it Dirichlet Laplacian with respect to $\mu$}.

Let $u\in\Dom\E$ and $f\in L^2_{\mu}[a,b]$. It is known that $u\in\Dom(\Delta_\mu)$ and $\Delta_\mu u=f$ if and only if $\Delta u=fd\mu$ in the sense of distribution, i.e.,
$$
\int_a^bu'v'\,dx=\int_a^b(-\Delta_\mu u)v\,d\mu\quad\text{for all }v\in C_c^\infty(a,b).
$$
It is also known (see, e.g., \cite{Bird-Ngai-Teplyaev_2003, Hu-Lau-Ngai_2006}) that there exists an orthonormal basis of eigenfunctions of $\Delta_\mu$ and the eigenvalues $\{\lambda_n\}$ are discrete and satisfy $0\le\lambda_1<\lambda_2<\cdots$ with  $\lim_{n\to\infty}\lambda_n=\infty$.

The operators $\Delta_\mu$ and their generalizations have been studied in connection with spectral functions of the string and diffusion processes (see \cite{Feller_1955,Feller_1957,Kac-Krein_1974}). More recently, they have been studied in connection with fractal measures (see \cite{Freiberg_2003, Freiberg_2004_2005,  Freiberg_2005, Freiberg-Lobus_2004, Fujita_1987,  Bird-Ngai-Teplyaev_2003,Hu-Lau-Ngai_2006, Naimark-Solomyak_1994, Naimark-Solomyak_1995,  Solomyak-Verbitsky_1995, Ngai_2011}).

Our study of the operator $\Delta_\mu$ is mainly motivated by the effort to extend the current theory of analysis on fractals to include iterated function systems (IFSs) with overlaps. Such IFSs do not satisfy the well-known post-critically finite condition or the open set condition. Nevertheless, by assuming $\mu$ satisfies a family of second-order self-similar identities, some results concerning $\Delta_\mu$ can be obtained. In \cite{Chen-Ngai_2010}, the finite element method is used to compute numerical approximations to the eigenvalues and eigenfunctions, and in \cite{Ngai_2011}, formulas defining the spectral dimension of $\Delta_\mu$ have been obtained for a class of measures that include the infinite Bernoulli convolution associated with the golden ratio and the three-fold convolution of the Cantor measure.

The main purpose of this paper is to study one-dimensional wave equations defined by a class of fractal Laplacians, subject to the Dirichlet boundary condition. More precisely, we study the following non-homogeneous \textit{hyperbolic initial/boundary value problem (IBVP)}:
\begin{equation}\label{eq:IBVP}
\left \{
\begin{aligned}
        u_{tt} &- \Delta_{\mu}u = f  \quad  &&\text{on } [a,b] \times [0,T],\\
        u&=0 \quad &&\text{on }\{a, b\} \times [0,T] ,\\
        u&=g,\  u_t=h \quad &&\text{on } [a,b] \times \{t=0\}.
\end{aligned}
\right.
\end{equation}

The following existence and uniqueness result (see Definition~\ref{D:weak-solution} for the definition of a \textit{weak solution}) follows easily from the general theory for wave equations in Hilbert spaces (see Section~\ref{S:prelim}).

\begin{theo}\label{T:main-1}
Assume $g\in \Dom\E, h \in L^2_{\mu}[a,b] \text{ and } f \in L^2(0,T;\Dom\E)$. Then equation (\ref{eq:IBVP}) has a unique weak solution.
\end{theo}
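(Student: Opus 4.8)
The plan is to follow the standard Galerkin approximation scheme for second-order hyperbolic problems in a Hilbert space triple, as developed in classical references such as Evans or Lions–Magenes, adapted to the present setting where the pivot space is $L^2_\mu[a,b]$, the energy space is $V:=\Dom\E$ with norm $\|v\|_V^2:=\E(v,v)+\|v\|_\mu^2$, and the Gelfand triple $V\hookrightarrow L^2_\mu[a,b]\hookrightarrow V^*$ has a \emph{compact} first embedding (as noted in the excerpt). The abstract problem is: find $u$ with $u\in L^2(0,T;V)$, $u_t\in L^2(0,T;L^2_\mu[a,b])$, $u_{tt}\in L^2(0,T;V^*)$, satisfying $\langle u_{tt},v\rangle+\E(u,v)=(f,v)_\mu$ for all $v\in V$ and a.e.\ $t$, together with $u(0)=g$, $u_t(0)=h$.

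First I would fix an orthonormal basis $\{w_k\}$ of $L^2_\mu[a,b]$ consisting of the Dirichlet eigenfunctions of $\Delta_\mu$ (these exist and lie in $V$ by the spectral result quoted above), which are also orthogonal in $\E$. For each $m$ I set $V_m:=\mathrm{span}\{w_1,\dots,w_m\}$ and seek $u_m(t)=\sum_{k=1}^m d_k^m(t)w_k$ solving the projected ODE system $(\,u_m''(t),w_k)_\mu+\E(u_m(t),w_k)=(f(t),w_k)_\mu$ with $u_m(0)$ the projection of $g$ onto $V_m$ and $u_m'(0)$ the projection of $h$. Because the $w_k$ are simultaneously orthonormal in $L^2_\mu$ and orthogonal in $\E$, this decouples into scalar ODEs $\ddot d_k^m+\lambda_k d_k^m=(f(t),w_k)_\mu$, which have unique absolutely continuous solutions with $L^2$ second derivative by elementary ODE theory (or Carathéodory's theorem), given $f\in L^2(0,T;L^2_\mu)$ — and the hypothesis $f\in L^2(0,T;\Dom\E)$ certainly guarantees this.

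Next comes the energy estimate. Testing the Galerkin equation with $u_m'(t)$ and integrating in time yields, via Grönwall's inequality, a bound $\sup_{0\le t\le T}\bigl(\|u_m'(t)\|_\mu^2+\E(u_m(t),u_m(t))\bigr)\le C\bigl(\|g\|_V^2+\|h\|_\mu^2+\|f\|_{L^2(0,T;L^2_\mu)}^2\bigr)$ uniform in $m$; a further estimate controls $u_m''$ in $L^2(0,T;V^*)$. These bounds let me extract a subsequence converging weakly-$*$ to a limit $u$ in the appropriate spaces. Passing to the limit in the Galerkin identity — first testing against a fixed finite combination of the $w_k$, then using density of $\bigcup_m V_m$ in $V$ — shows $u$ solves the weak formulation. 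Recovering the initial conditions $u(0)=g$, $u_t(0)=h$ uses the standard integration-by-parts-in-time argument together with the continuity properties $u\in C([0,T];L^2_\mu)$ and $u_t\in C([0,T];V^*)$ obtained from the regularity of $u$ and $u_{tt}$. For uniqueness, one takes $f=g=h=0$, and the usual trick — testing the equation with $\int_t^s u\,d\tau$ or using the function $v(t)=\int_0^t u$ — gives $u\equiv 0$.

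The main obstacle is not in any one step individually but in verifying that the abstract hyperbolic theory genuinely applies here: one must confirm that $\E$ is a closed, symmetric, nonnegative bilinear form on $V=\Dom\E$ that is coercive on $V$ relative to $L^2_\mu$ in the sense $\E(v,v)+\|v\|_\mu^2\ge c\|v\|_V^2$ (trivial by our choice of $V$-norm), that $V$ is separable and densely, continuously, compactly embedded in $L^2_\mu[a,b]$ (stated in the excerpt), and that the eigenfunction basis has the double-orthogonality used above. Once these structural facts are in hand — and they are exactly what the excerpt has set up in its discussion of the Dirichlet form and the operator $T$ — the result is, as the authors say, an immediate consequence of the general Hilbert-space theory, so I would present the proof as a short verification that the hypotheses of that theory hold, citing a standard reference for the Galerkin machinery rather than reproducing it.
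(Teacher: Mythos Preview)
Your proposal is correct and, in its final form, takes essentially the same approach as the paper: verify that the bilinear form $\E$ and the Gelfand triple $\Dom\E\hookrightarrow L^2_\mu[a,b]\hookrightarrow(\Dom\E)'$ satisfy the structural hypotheses of the abstract second-order hyperbolic theory, then invoke a standard reference. The paper cites Wloka's version (checking boundedness, time-independence, symmetry, and $V$-coercivity of $a(t;\cdot,\cdot)=\E(\cdot,\cdot)$) rather than Evans or Lions--Magenes, but the content is the same; your detailed sketch of the Galerkin machinery is precisely what lies behind the cited theorem, and your concluding sentence---present the proof as a short hypothesis check plus citation---is exactly what the authors do.
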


We are mainly interested in fractal measures $\mu$. Let $D$ be a non-empty compact subset of $\R^d$. A function $S : D \rightarrow D$ is called a \emph{contraction} on $D$ if there is a number $c$ with $0<c<1$ such that
\begin{equation}\label{eq:contraction}
|S(x)- S(y)|\leq c|x-y|\quad\text{for all } x,y\in D.
\end{equation}
An \emph{iterated function system (IFS)} on $D$ is a finite collection of contractions on $D$. Each IFS $\{S_i\}_{i=1}^q$ defines a unique compact subset $F\subseteq D$, called the \textit{invariant set} or \textit{attractor}, satisfying
$$
F=\bigcup_{i=1}^q S_i(F).
$$
Also, to each set of probability weights $\{p_i\}_{i=1}^q$, where $p_i>0$ and $\sum_{i=1}^q p_i =1$, there exists a unique probability measure, called the \emph{invariant measure}, satisfying the identity
\begin{equation}\label{self-similarmeasure}
\mu = \sum_{i=1}^q p_i\mu \circ S^{-1}_i
\end{equation}
(see \cite{Hutchinson_1981, Falconer_1990}).
$S$ is a contractive \textit{similitude} if equality in \eqref{eq:contraction} holds. IFSs studied in this paper consist of contractive similitudes; they are of the form
\begin{equation}\label{eq:IFS_similitudes}
S_i(x)=\rho_i R_i x +b_i,\qquad i=1,\dots,q,
\end{equation}
where $0<\rho_i<1$, $R_i$ is an orthogonal transformation, and $b_i\in\R^d$. For such an IFS, we call the corresponding invariant set $F$ the \textit{self-similar set} and the invariant measure $\mu$ the \emph{self-similar measure}.

An IFS $\{S_i\}_{i=1}^q$ is said to satisfy the \textit{open set condition (OSC)} if there exists a non-empty bounded open set $U$ such that $\cup_i S_i(U)\subseteq U$ and $S_i(U)\cap S_j(U)=\emptyset$ if $i\ne j$. An IFS that does not satisfy the OSC is said to have \textit{overlaps}. For an IFS of contractive similitudes, it is known that if the linear parts of the IFS maps are commensurable, then the p.c.f. condition implies the OSC~\cite{Deng-Lau_2008}.

We are interested in one-dimensional self-similar measures defined by IFSs with overlaps. Such IFSs are not p.c.f. and are thus not covered by Kigami's theory. In order to discretize a wave equation and obtain numerical approximations to the weak solution, we will assume that the corresponding self-similar measure satisfies a family of second-order self-similar identities, an idea introduced by Strichartz \textit{et al.} \cite{Strichartz-Taylor-Zhang_1995}. Let $\{S_i\}_{i=1}^q$ be an IFS of contractive similitudes on $\R$ and let $\mu$ be the corresponding self-similar measure. Assume, in addition, that $\supp(\mu)=[a,b]$. Define an auxiliary IFS
\begin{equation}\label{eq:IFS_T_j}
    T_j(x)=\rho^{n_j}x +d_j,\quad j=1,2,\ldots,N,
\end{equation}
where $n_j\in\mathbb N$ and $d_j\in\mathbb R$, and let
\begin{equation}\label{eq:rho_def}
\rho:=\max\{\rho^{n_j}:1\le j\le N\}.
\end{equation}
$\mu$ is said to
satisfy a family of {\it second-order self-similar identities} (or
simply {\it second-order identities}) with respect to
$\{T_j\}_{j=1}^{N}$ (see \cite{Lau-Ngai_2000}) if
\begin{enumerate}
\item[(i)]
$\text{supp}(\mu)\subseteq \bigcup_{j=1}^{N}
T_j(\text{supp}(\mu))$, and
\item[(ii)] for each Borel subset
$A\subseteq\text{supp}(\mu)$ and $0\le i,j\le N$,
$\mu(T_iT_jA)$ can be expressed as a linear combination of
$\{\mu(T_kA): k=1,\ldots,N\}$ as
\begin{equation*}
    \mu(T_iT_jA)=\sum_{k=0}^N d_k\mu (T_kA),
\end{equation*}
where $d_k = d_k(i,j)$ are independent of $A$. In matrix form,
\begin{equation}\label{2ndorderidentity}
\begin{aligned}
    &\left[ \begin{array}{c}
            \mu(T_1T_jA)\\
            \vdots\\
            \mu(T_{N}T_jA)
    \end{array}
    \right]
    =M_j\left[\begin{array}{c}
            \mu(T_1A)\\
            \vdots\\
            \mu(T_{N}A)
    \end{array}
    \right], \quad j=1,\ldots,N, \qquad\text{or equivalently,}\\
    &\mu(T_iT_jA)=\emph{\textbf{e}}_i M_j
                    \left[\begin{array}{c}
            \mu(T_1A)\\
            \vdots\\
            \mu(T_{N}A)
            \end{array}\right],\quad i,j=1,\dots,N,
\end{aligned}
\end{equation}
where $\emph{\textbf{e}}_i$ is the $i$th row of the $N\times N$
identity matrix, and $M_j$ is some $N\times N$ matrix independent of $A$.
\end{enumerate}

For our purposes, we will assume that $\{T_j\}_{j=0}^N$ satisfies the OSC. The $m$-th level iteration of the auxiliary IFS $\{T_j\}_{j=0}^N$ induces a partition $V_m$ of  ${\rm supp}(\mu)=[a,b]$. Moreover, the $\mu$ measure of each subinterval in the partition can be computed in terms of a matrix product. This provides us with a good way to discretize the wave equation.

By letting $f(x,t)=0$, multiplying the first equation in (\ref{eq:IBVP}) by  $v\in \Dom\E$,  integrating both sides with respect to $d\mu$, and then using integration by parts, we obtain
\begin{equation}\label{eq:PDE-weakform-main}
 -\int^{b}_{a} u_x(x,t)\,v'(x) \, dx = \int^{b}_{a}
u_{tt}(x,t)\,v(x) \,d\mu,
\end{equation}
where $u_x(x,t)$ is the weak partial derivative of $u$ with respect to $x$ and $u_{tt}$ is the weak second partial derivative with respect to $t$.

\begin{theo}\label{T:main-2}
Let $\mu$ be a self-similar measure defined by a one-dimensional IFS of contractive similitudes on $\R$ as in (\ref{self-similarmeasure}) and  \eqref{eq:IFS_similitudes}. Assume that {\rm supp}$(\mu)=[a,b]$ and that $\mu$
satisfies a family of second-order self-similar
identities.
Then the finite element method for the equation (\ref{eq:PDE-weakform-main})
discretizes it
 to a system of second-order ordinary differential equations \eqref{eq:IVP}, which
has a unique solution (and can be solved numerically).
\end{theo}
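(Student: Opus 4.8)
The plan is to carry out the standard linear finite element (Galerkin) discretization of \eqref{eq:PDE-weakform-main} in the spatial variable, using the partition $V_m$ produced by the $m$-th iteration of the auxiliary IFS $\{T_j\}_{j=0}^N$, and then to verify that the resulting \emph{mass matrix} is invertible, so that the semidiscrete problem turns into an explicit linear system of second-order ODEs with constant coefficients. Write $V_m$ as $a=x_0<x_1<\dots<x_{n+1}=b$ and let $\SSS_m$ be the $n$-dimensional space of continuous functions on $[a,b]$ that are affine on each $[x_{i-1},x_i]$ and vanish at $a$ and $b$, with the usual nodal hat basis $\{\phi_i\}_{i=1}^n$, $\phi_i(x_j)=\delta_{ij}$; the piecewise affine functions belong to $\Dom\E$, so $\SSS_m\subseteq\Dom\E$ (see Section~\ref{S:prelim}). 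The finite element solution is $u_m(x,t)=\sum_{i=1}^n c_i(t)\phi_i(x)$, where the coefficient functions $c_i(t)$ are required to make \eqref{eq:PDE-weakform-main} hold for every $v\in\SSS_m$; taking $v=\phi_j$ and using bilinearity yields, for $j=1,\dots,n$,
\[
\sum_{i=1}^n (\phi_i,\phi_j)_\mu\,c_i''(t)=-\sum_{i=1}^n \E(\phi_i,\phi_j)\,c_i(t),
\]
that is, $A\,\mathbf{c}''(t)=-B\,\mathbf{c}(t)$ (together with a load vector $\mathbf{F}(t)$, $F_j(t)=(f(\cdot,t),\phi_j)_\mu$, in the non-homogeneous case), with mass matrix $A=\big[(\phi_i,\phi_j)_\mu\big]$ and stiffness matrix $B=\big[\E(\phi_i,\phi_j)\big]$. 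This is the system \eqref{eq:IVP}, supplemented by initial data read off from $g$ and $h$ --- for example $c_i(0)=g(x_i)$, which is meaningful since $g\in\Dom\E\subseteq H_0^1(a,b)\hookrightarrow C[a,b]$, and $\mathbf{c}'(0)=A^{-1}\mathbf{d}$ with $d_j=(h,\phi_j)_\mu$ (the $L^2_\mu$-projection of $h$).

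The key algebraic point is that $A$ is symmetric and positive definite. It is a Gram matrix of $(\cdot,\cdot)_\mu$, hence symmetric and positive semidefinite; and if $\mathbf{c}^{\top}A\mathbf{c}=0$, then $v:=\sum_i c_i\phi_i$ satisfies $\|v\|_\mu=0$, so, $v$ being continuous and piecewise affine, $v\not\equiv 0$ would force $v$ to be nonzero on some nondegenerate subinterval $I\subseteq[a,b]$ and hence $\mu(I)>0$ and $\|v\|_\mu>0$ --- here one uses $\supp(\mu)=[a,b]$ --- a contradiction; thus $\mathbf{c}=0$ and $A$ is invertible. Consequently \eqref{eq:IVP} is equivalent to $\mathbf{c}''(t)=-A^{-1}B\,\mathbf{c}(t)+A^{-1}\mathbf{F}(t)$, and, rewritten as a first-order system in $(\mathbf{c},\mathbf{c}')$ with constant coefficient matrix (and, when $f\ne 0$, an inhomogeneity that is integrable on $[0,T]$ because $f\in L^2(0,T;\Dom\E)$), it has a unique solution on $[0,T]$ by the classical existence and uniqueness theorem for linear systems of ordinary differential equations.

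It remains to justify the parenthetical claim that \eqref{eq:IVP} can actually be solved numerically, i.e. that its coefficients are effectively computable. The stiffness entries $\E(\phi_i,\phi_j)=\int_a^b\phi_i'\phi_j'\,dx$ are elementary, the $\phi_i'$ being piecewise constant. The mass entries $(\phi_i,\phi_j)_\mu=\int_a^b\phi_i\phi_j\,d\mu$ vanish unless $|i-j|\le 1$, and on each cell $I$ of $V_m$ the integrand $\phi_i\phi_j$ is a quadratic polynomial, so the problem reduces to the moments $\int_I x^k\,d\mu$ for $k=0,1,2$ over the cells $I\in V_m$; since $\{T_j\}_{j=0}^N$ satisfies the OSC and $\mu$ obeys the family of second-order self-similar identities \eqref{2ndorderidentity}, the $\mu$-mass of each cell --- and, by iterating the identities, the distribution of that mass among sub-cells, hence the moments --- is obtained from finite products of the transfer matrices $M_j$ together with the affine data of the $T_j$. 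The two steps that genuinely require care are therefore the invertibility of $A$, which rests precisely on the standing hypothesis $\supp(\mu)=[a,b]$, and the bookkeeping that expresses the $\mu$-moments over the cells of $V_m$ via the second-order identities; the remainder is the standard linear finite element construction together with the elementary theory of linear ODE systems.
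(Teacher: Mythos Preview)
Your proof is correct and follows essentially the same route as the paper: derive the Galerkin system $\mathbf{M}\mathbf{w}''=-\mathbf{K}\mathbf{w}$, invoke $\supp(\mu)=[a,b]$ to show the mass matrix is positive definite (the paper's Proposition~\ref{P:M-invertible}), and use the second-order identities to reduce the mass entries to the moments $\I_{k,j}$ (the paper's Lemma~\ref{P:M_determine_integrals}). The only minor deviation is that you take the initial velocity via $L^2_\mu$-projection of $h$ rather than nodal interpolation $w_i'(0)=h(x_i)$ as the paper does; your choice is arguably more careful when $h$ is merely in $L^2_\mu[a,b]$, but it does not change the argument.
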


Based on this result, we solve the homogeneous IBVP \eqref{eq:IBVP} numerically for three different measures, namely, the weighted Bernoulli-type % 1
measure, the infinite Bernoulli convolution associated with the golden ratio, and the 3-fold convolution of the Cantor measure. We show that the approximate solutions converge to the actual weak solution and obtain a rate of convergence.

\begin{theo}\label{T:main-3}
Assume the same hypotheses as in Theorem~\ref{T:main-2}. Let $f=0$ in equation \eqref{eq:IBVP} and fix $t\in [0,T]$. Then the approximate solutions $u^m$ obtained by the finite element method converge in $L^2_\mu[a,b]$ to the actual weak solution $u$. Moreover, there exists a constant $C>0$ such that for all $m\ge 1$,
$$
\left\|u^m-u\right\|_{\mu}\leq \big(C\sqrt{T}\left\|u_{tt}\right\|_{2,\Dom\E}+2\left\|u\right\|_{\Dom\E}\big)\rho^{m/2}.
$$
\end{theo}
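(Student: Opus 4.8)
The plan is to run the standard energy argument for semidiscrete Galerkin approximations of a second-order hyperbolic equation, using the piecewise-linear finite-element space attached to the fractal mesh. Write $S^m\subseteq\Dom\E$ for the space of continuous functions that are affine on each cell of the partition $V_m$ and vanish at $a,b$, let $I_m\colon\Dom\E\to S^m$ be interpolation at the vertices of $V_m$, and let $u^m(\cdot,t)\in S^m$ be the semidiscrete solution from Theorem~\ref{T:main-2}, i.e.\ the solution of the ODE system \eqref{eq:IVP} with the natural initial data $u^m(\cdot,0)=I_m g$, $u^m_t(\cdot,0)=I_m h$. Since that system is linear with invertible mass matrix, $u^m\in C^2([0,T];S^m)$, so the time differentiations below are legitimate; moreover $u^m$ satisfies the Galerkin form of \eqref{eq:PDE-weakform-main}, $(u^m_{tt},v)_\mu+\E(u^m,v)=0$ for all $v\in S^m$ and $t$, while the weak solution $u$ of Theorem~\ref{T:main-1} satisfies $(u_{tt},v)_\mu+\E(u,v)=0$ for all $v\in\Dom\E$ and a.e.\ $t$ (the regularity implicit in $\|u_{tt}\|_{2,\Dom\E}<\infty$ being assumed; otherwise the bound is vacuous).

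First I would record the mesh geometry and the basic interpolation estimate. Each cell of $V_m$ equals $T_{j_1}\!\circ\cdots\circ T_{j_m}([a,b])$ for some word $(j_1,\dots,j_m)$, hence has Lebesgue length at most $\rho^m(b-a)$ by \eqref{eq:rho_def}; since $\{T_j\}$ satisfies the OSC, distinct cells overlap only at a common endpoint, so $V_m$ is a genuine partition of $[a,b]=\supp(\mu)$ with mesh size $\le\rho^m(b-a)$. (The second-order self-similar identities play no role here — they enter only in Theorem~\ref{T:main-2}, to make the mass matrix computable; the rate is governed solely by the mesh size.) For $w\in\Dom\E\subseteq H_0^1(a,b)$ and any cell $[c,d]$, $w-I_m w$ vanishes at $c$ and $d$, whence $\|w-I_m w\|_{L^\infty[c,d]}\le(d-c)^{1/2}\|w'\|_{L^2[c,d]}$; maximizing over cells and using that $\mu$ is a probability measure on $[a,b]$ gives
\[
\|w-I_m w\|_\mu\ \le\ \|w-I_m w\|_{L^\infty[a,b]}\ \le\ C_0\,\rho^{m/2}\,\|w\|_{\Dom\E},
\]
with $C_0$ depending only on $b-a$; the same one-dimensional estimate yields the Poincar\'e-type bound $\|v\|_\mu\le(b-a)^{1/2}\E(v,v)^{1/2}$ on $H_0^1(a,b)$, used below.

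Next comes the decisive structural observation: in one dimension the finite-element projection \emph{is} nodal interpolation. Indeed, testing $\E(I_m w,v)=\E(w,v)$ against the hat functions spanning $S^m$ reduces, cell by cell, to the fact that the derivative of $I_m w$ on a cell equals the mean of $w'$ there; hence $\E(I_m w - w,v)=0$ for every $v\in S^m$, i.e.\ $I_m$ coincides with the $\E$-orthogonal (Ritz) projection onto $S^m$. Consequently the interpolation error automatically enjoys Galerkin orthogonality, which is exactly what lets the argument avoid an Aubin--Nitsche/elliptic-regularity step (unavailable for $\Delta_\mu$). Since $I_m$ acts only in $x$, $(I_m u)_{tt}=I_m(u_{tt})$, so by the previous step $\|(I_m u-u)_{tt}(\cdot,t)\|_\mu\le C_0\rho^{m/2}\|u_{tt}(\cdot,t)\|_{\Dom\E}$ for a.e.\ $t$.

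Finally, the energy estimate and conclusion. Set $\eta:=I_m u-u$ and $\theta:=u^m-I_m u\in S^m$, so $u^m-u=\theta+\eta$ with $\theta(\cdot,0)=\theta_t(\cdot,0)=0$. Subtracting the two weak formulations and using Galerkin orthogonality to delete the term $\E(\eta,v)$ gives $(\theta_{tt},v)_\mu+\E(\theta,v)=-(\eta_{tt},v)_\mu$ for all $v\in S^m$ and a.e.\ $t$; taking $v=\theta_t(\cdot,t)$ gives $\tfrac{1}{2}\frac{d}{dt}\big(\|\theta_t\|_\mu^2+\E(\theta,\theta)\big)\le\|\eta_{tt}\|_\mu\|\theta_t\|_\mu$. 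Integrating over $[0,t]$, bounding the right side by $G\cdot I$ with $G:=\sup_{[0,T]}\|\theta_t\|_\mu$ and $I:=\int_0^T\|\eta_{tt}\|_\mu\,ds$, one gets $G^2\le 2GI$, hence $G\le 2I$, and then $\E(\theta(\cdot,t),\theta(\cdot,t))\le 2GI\le 4I^2$. The Poincar\'e bound, Cauchy--Schwarz in $t$, and the previous step then yield $\|\theta(\cdot,t)\|_\mu\le 2(b-a)^{1/2}\sqrt T\,\|\eta_{tt}\|_{L^2(0,T;L^2_{\mu}[a,b])}\le C\sqrt T\,\|u_{tt}\|_{2,\Dom\E}\,\rho^{m/2}$; combining with $\|\eta(\cdot,t)\|_\mu\le C_0\rho^{m/2}\|u(\cdot,t)\|_{\Dom\E}$ via the triangle inequality gives the asserted estimate, and $\rho^{m/2}\to0$ gives convergence in $L^2_{\mu}[a,b]$. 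The hard part is really the two structural facts — that the correct approximation rate is $\rho^{m/2}$ (mesh geometry together with the one-dimensional Sobolev estimate and $\mu([a,b])=1$), and that in one dimension the finite-element projection is nodal interpolation, so that no duality argument is needed; granting these, the remainder is the textbook hyperbolic energy estimate.
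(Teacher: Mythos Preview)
Your argument is correct and follows essentially the same route as the paper: both split the error through the nodal interpolant $I_m=\PP_m$, use its $\E$-orthogonality (the paper's Lemma~\ref{P:Rayleigh-Ritz-1}), prove the $\rho^{m/2}$ interpolation bound (Lemma~\ref{th:convproj}), and run the hyperbolic energy estimate (Theorem~\ref{th:convproj2}). The only cosmetic difference is in handling the energy inequality---you use a $\sup$-bound trick, the paper divides by $\sqrt{E}$ and integrates---and your interpolation constant is slightly sharper than the paper's $2\|V_m\|^{1/2}$.
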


This paper is organized as follows.
We summarize some basic classical results, definitions, and notation in Section~\ref{S:prelim}.
In Section~\ref{S:FEM} we use the finite element and central difference methods to obtain numerical
approximations to the corresponding homogeneous IBVP (\ref{eq:IBVP}), proving Theorem~\ref{T:main-2}.
In Section~\ref{S:Frac} we apply our numerical methods to the above-mentioned measures, and
illustrate some numerical results.
In Section~\ref{S:convergence} we prove the convergence of the approximation scheme and obtain an estimate for the convergence rate stated in Theorem~\ref{T:main-3}.

\section{Preliminaries}\label{S:prelim}
\setcounter{equation}{0}

In this section, we summarize some notation, definitions, and preliminary results that will be used throughout the rest of the paper. For a Banach space $X$, we denote its topological dual by $X'$.
For  $v\in X'$ and $u\in X$ we let $\langle v,u\rangle=\langle v,u\rangle_{X',X}:=v(u)$ denote the \textit{dual pairing} of $X'$ and $X$.

A function $s : [0,T] \rightarrow X$ is called \textit{simple} if it has the form
\begin{equation}
s(t) = \sum_{m=1}^N \chi_{E_m}(t) u_m\quad  \text{for }t \in[0, T],
\end{equation}
where each ${E_m}$ is a Lebesgue measurable subset of $[0,T]$, $u_m\in X$ for $m=1,\dots,N$, and $\chi_{E_m}$ is the characteristic function on $E_m$. A function $u : [0,T] \rightarrow X$ is \textit{strongly measurable} if there exist simple functions $s_n:[0,T] \rightarrow X$ such that
$$
s_n(t) \rightarrow u(t)\quad \text{ as $n\to\infty$ for Lebesgue a.e. } t\in [0,T].
$$
A function $u : [0,T] \rightarrow X$ is \textit{weakly measurable} if for each $v\in X'$, the mapping
$t\mapsto \left\langle v, u(t) \right\rangle $ is Lebesgue measurable.

A function $ u: [0,T] \rightarrow X$ is \textit{almost separably valued} if there exists a subset $E \subseteq [0,T]$ with zero Lebesgue measure such that the set $ \{u(t): t\in [0,T] \backslash E \}$ is separable. By a theorem of Pettis \cite{Pettis_1938}, a function $u: [0,T] \rightarrow X$ is strongly measurable if and only if it is weakly measurable and almost separably valued. Since any subset of a separable Banach space $X$ is separable, the two concepts of measurability coincide and we can use the term \textit{measurable} without ambiguity.

\begin{defi}\label{D:banach-space}
Let $X$ be a separable Banach space with norm $\left\|\cdot \right\|_X. $
Define $L^p(0,T;X)$ to be the space of all measurable functions $u: [0,T] \rightarrow X$ satisfying
\begin{enumerate}
\item[(a)] $\displaystyle\left\| u \right\|_{ L^p(0,T;X)} := \Big( \int_0^{T} \left\|u(t)\right\|_X^p\,dt\Big)^\frac{1}{p}<\infty$,
if $1\leq p < \infty$, and

\item[(b)] $\displaystyle\left\| u \right\|_{ L^\infty(0,T;X)} :=  {\rm ess{\,} sup}_{0\leq t \leq T}  \left\| u(t) \right\|_X<\infty$, if $p=\infty$.
\end{enumerate}
If the interval $[0,T]$ is understood, we will abbreviate these norms as $\left\| u \right\|_{p,X}$ and
$\left\| u \right\|_{\infty,X}$.
\end{defi}

\begin{rmk}\label{rmk:Banach-Hilbert} For each $1\leq p \leq \infty$, $L^p(0,T;X)$ is a Banach space; moreover, $L^{p_2}(0,T;X)\subseteq L^{p_1}(0,T;X)$ if $0\le p_1\le p_2\le\infty$. If $(X,(\cdot,\cdot)_X)$ is a separable Hilbert space, then $L^2(0,T;X)$ is a Hilbert space with the inner product
$$
(u,v)_{L^2(0,T;X)}:=\int_0^T \big(u(t),v(t)\big)_X\,dt.
$$
\end{rmk}

\begin{comment}
\begin{defi} Let $X$ be a Banach space.
A function $u: (a,b) \subseteq \R \rightarrow X$ is \emph{(strongly) differentiable} at $t\in(a,b)$ if the limit $\lim_{h\rightarrow0} (u(t+h)-u(t))/h$ exists in $X$. The limit, if exists, is called the {\rm strong derivative} of $u$ at $t$ and will be
denoted by $\dot{u}(t)$.
\end{defi}

\begin{rmk}
Let $X$ be a Banach space and $ u:(a,b) \subseteq \R \rightarrow X$. If $u$ has a strong derivative at $t\in(a,b)$, then $u$ is Fr$\acute{e}$chet differentiable at $t$ and also G$\hat{a}$teaux differentiable at $t$.
\end{rmk}
\end{comment}

\begin{defi}\label{D:weak-derivative}
Let $X$ be a Banach space and $u \in L^1(0,T;X)$. We say $v \in L^1(0,T;X)$ is the {\emph{weak derivative}} of $u$, written $u_t=v$, if
$$
\int_0^T \phi_t(t) u(t) \,dt =- \int_0^T \phi(t) u_t(t)\,dt
$$
for all scalar test functions $\phi \in C_c^{\infty}(0,T)$.
\end{defi}

\begin{comment}
\begin{defi}\label{D: C-derivative}
Let $X$ be a Banach space. We define $C(0,T;X)$ (resp. $C^1(0,T;X)$) to be the vector space of all continuous (resp. $C^1$) functions $u:[0,T] \rightarrow X$ such that
$$
\left\|u\right\|_{C(0,T;X)}:= \max_{0\leq t \leq T}\left\|u\right\|_X < \infty\qquad(\text{resp.}
\left\|u\right\|_{C^1(0,T;X)}:=\left\|\dot u\right\|_{C(0,T;X)}< \infty).
$$
Similarly we define $C^k(0,T;X)$ for all $k \ge 1$. $\left\|\cdot\right\|_{C^k(0,T;X)}$ is a norm.
\end{defi}
\end{comment}

\begin{defi}\label{D:weak-convergence}
Let $X$ be a Banach space and $X'$ its dual. We say a sequence $\{u_m\}_{m=1}^{\infty}\subseteq X$ {\emph{converges weakly to}} $u \in X$, written $ u_m \rightharpoonup u$,
if $$\left\langle v, u_m\right\rangle \rightarrow \left\langle v, u \right\rangle $$
 for each bounded linear functional $v \in X'$.
\end{defi}
For the more general definition of derivatives of distributions with values in a Hilbert space, we refer the reader to \cite[Section 25]{Wloka_1987}.

The notion of a Gelfand triple \cite{Gelfand-Vilenkin_1964}, defined below, plays an important role in our investigation of the wave equation.

\begin{defi}\label{D:Gelfand-triple}
Let $V$ and $H$ be separable Hilbert spaces with the continuous injective dense embedding $\iota:V\hookrightarrow H$. By identifying $H$ with its dual $H'$, we obtain the following continuous and dense embedding:
$$
V\hookrightarrow H\cong H'\hookrightarrow V'.
$$
Assume in addition that the dual pairing between $V$ and $V'$ is compatible with the inner product on $H$, in the sense that
$$
\langle v,u\rangle_{V',V}=(v,u)_H
$$
for all $u\in V\subset H$ and $v\in H\cong H'\subset V'$. The triple $(V,H,V')$ is called a {\rm Gelfand triple} (The pair $(H,V)$ is also called a {\rm rigged Hilbert space}.)
\end{defi}

We remark that since $V$ is itself a Hilbert space, it is isomorphic with its dual $V'$. However, this isomorphism is in general not the same as the composition $\iota^*\iota:V\subset H=H'\hookrightarrow V'$, where $\iota^*$ is the adjoint of $\iota$.

Throughout the rest this section we let $\mu$ be a finite positive Borel measure on $\R$ with ${\rm supp}(\mu)\subseteq[a,b]$ and $\mu(a,b)>0$, where $-\infty<a<b<\infty$. It is known that the following important condition is satisfied (see e.g., \cite{Hu-Lau-Ngai_2006, Naimark-Solomyak_1994}):
There exists a constant $C>0$ such that
\begin{equation}\label{eq:condition_C}
\int_a^b|u|^2\,d\mu\le C\int_a^b |\nabla u|^2\,dx\quad\text{for all }
u\in C_c^\infty(a,b).
\end{equation}
This condition implies that each equivalence class $u\in
H_0^1(a,b)$ contains a unique (in $L^2_{\mu}[a,b]$ sense) member
$\bar u$ that belongs to $L^2_{\mu}[a,b]$ and satisfies both
conditions below:
\begin{enumerate}
\item[(1)] There exists a sequence $\{u_n\}$ in $C_c^\infty(a,b)$ such that
$u_n\to\bar u$ in $H_0^1(a,b)$ and $u_n\to\bar u$ in $L^2_{\mu}[a,b]$;

\item[(2)] $\bar{u}$ satisfies the inequality in \eqref{eq:condition_C}.
\end{enumerate}
We call
$\bar u$ the {\it $L^2_{\mu}[a,b]$-representative} of $u$. Assume condition \eqref{eq:condition_C} holds
and define a mapping
$\iota:H_0^1(a,b)\to L^2_{\mu}[a,b]$ by
$$
\iota(u)=\bar u.
$$
$\iota$ is a bounded linear operator.
$\iota$ is not necessarily injective, because it is possible for a non-zero function
$u\in H_0^1(a,b)$ to have an $L^2_{\mu}[a,b]$-representative that has zero $L^2_{\mu}[a,b]$-norm.
To deal with this situation, we consider the subspace $\mathcal N$ of $H_0^1(a,b)$ defined
as
$$
\mathcal N:=\{u\in H_0^1(a,b):\Vert \iota(u)\Vert_{\mu}=0\}.
$$
The continuity of $\iota$ implies that $\mathcal N$ is a closed
subspace of $H_0^1(a,b)$. Let $\mathcal N^\bot$ be the
orthogonal complement of $\mathcal N$ in $H_0^1(a,b)$. It is
clear that $\iota:\mathcal N^\bot\to L^2_\mu[a,b]$ is injective, and we can identify $\mathcal N^\bot$ and $\iota(\mathcal N^\bot)$. $\mathcal N^\bot$ is dense in $L^2_{\mu}[a,b]$ (see \cite{Hu-Lau-Ngai_2006}). Throughout this paper, we let
$$
\Dom\E:=\mathcal N^\bot\qquad\text{and}\qquad\Vert\cdot\Vert_{\Dom}=\Vert\cdot\Vert_{H_0^1(a,b)}.
$$
The equivalence classes represented by $u\in H_0^1(a,b)$ and $\bar u\in L_\mu^2[a,b]$ are in general different. Corollary~\ref{Cor:continous-representative} below says that the continuous representative of $u$ lies in the intersection of these two equivalence classes. We will frequently identify $\bar u$ and $u$ without mention.

\begin{prop}
Let $u\in H_0^1(a,b)$ and $\{\phi_n\}\subset C_c^{\infty}(a,b)$ such that $\phi_n\rightarrow u$ in $H_0^1(a,b).$
Then there exists a subsequence $\{\phi_{n_k}\}$ such that $\phi_{n_k} \rightarrow u_c$ everywhere in $[a,b]$, where $u_c$ is the continuous representative of the equivalence class of $u$ in $H_0^1(a,b).$
\end{prop}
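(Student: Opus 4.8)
The plan is to exploit the one‑dimensional Sobolev embedding in its most elementary form: convergence of the derivatives in $L^2[a,b]$ combined with a fundamental‑theorem‑of‑calculus representation and the Cauchy--Schwarz inequality forces \emph{uniform} convergence of the $\phi_n$ on $[a,b]$, and the uniform limit is then identified with the continuous representative $u_c$ by uniqueness of continuous representatives. (This in fact proves the stronger statement that the full sequence, not merely a subsequence, converges uniformly.)

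First I would record that, since each $\phi_n\in C_c^\infty(a,b)$, we have $\phi_n(a)=0$ and hence $\phi_n(x)=\int_a^x\phi_n'(t)\,dt$ for every $x\in[a,b]$. Because $\phi_n\to u$ in $H_0^1(a,b)$, the sequence $\{\phi_n\}$ is Cauchy in $H_0^1(a,b)$, so in particular $\{\phi_n'\}$ is Cauchy in $L^2[a,b]$. For any $x\in[a,b]$ and any $m,n$, Cauchy--Schwarz gives
$$
|\phi_n(x)-\phi_m(x)|=\Big|\int_a^x\big(\phi_n'(t)-\phi_m'(t)\big)\,dt\Big|\le (b-a)^{1/2}\,\|\phi_n'-\phi_m'\|_{L^2[a,b]},
$$
so $\{\phi_n\}$ is uniformly Cauchy on $[a,b]$ and therefore converges uniformly to some $w\in C[a,b]$.

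Next I would identify $w$ with $u_c$. Uniform convergence on the bounded interval $[a,b]$ implies $\phi_n\to w$ in $L^2[a,b]$; on the other hand $\phi_n\to u$ in $H_0^1(a,b)$ implies $\phi_n\to u$ in $L^2[a,b]$. By uniqueness of $L^2$‑limits, $w=u$ Lebesgue‑a.e.\ on $[a,b]$, i.e., $w$ is a representative of the equivalence class of $u$. Since $w$ is continuous and $u_c$ is the unique continuous function in that class, and two continuous functions that agree Lebesgue‑a.e.\ on $[a,b]$ must coincide everywhere, we conclude $w=u_c$ on all of $[a,b]$. Hence $\phi_n\to u_c$ uniformly, in particular everywhere, on $[a,b]$, and any subsequence (for instance the whole sequence) satisfies the stated conclusion.

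I do not anticipate a genuine obstacle: the argument is routine. The only point that deserves a word of care is the final identification step, where one must invoke uniqueness of the continuous representative of an $H_0^1(a,b)$ class (equivalently, that the only continuous function vanishing Lebesgue‑a.e.\ on $[a,b]$ is the zero function); this is what pins the uniform limit $w$ down to be exactly $u_c$ rather than merely some continuous function agreeing with $u$ almost everywhere.
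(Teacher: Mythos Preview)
Your proof is correct and actually establishes a stronger conclusion than the paper's: the full sequence $\{\phi_n\}$ converges uniformly on $[a,b]$, not merely a subsequence pointwise. The paper takes a different route. It first extracts a subsequence $\{\phi_{n_k}\}$ converging to $u_c$ Lebesgue a.e.\ (via $L^2$ convergence), and then upgrades a.e.\ convergence to everywhere convergence by an equicontinuity argument: for a given $x$, it picks a nearby point $y$ where the a.e.\ convergence holds, and controls $|\phi_{n_k}(x)-\phi_{n_k}(y)|$ by $C|x-y|^{1/2}$ using the uniform $H_0^1$ bound on the sequence. Both arguments rest on the same one-dimensional Sobolev inequality, but you apply it to the \emph{difference} $\phi_n-\phi_m$ to get a uniform Cauchy estimate directly, whereas the paper applies it to each $\phi_{n_k}$ separately to get equicontinuity and then transfers convergence from a dense set of good points. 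Your approach is shorter, avoids passing to a subsequence, and yields uniform convergence for free; the paper's approach illustrates the more general ``a.e.\ convergence plus equicontinuity implies everywhere convergence'' mechanism, which would still work in settings where one only had a.e.\ (rather than $L^2$) convergence of a subsequence to begin with.
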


\begin{proof}
 Let  $\{\phi_{n_k}\}$ be a subsequence converging pointwise Lebesgue a.e. to $u_c$ on $(a,b)$. Let $x \in (a,b)$ and $\epsilon>0$ be arbitrary. First, since $\phi_n$ is convergent, there exists $C>0$ such that
\begin{equation}\label{equality1}
\left\|\phi_n\right\|_{\Dom{\E}}\leq C\quad  \text{ for all }n \in \N.
\end{equation}
Next, by the continuity of $u_c$, there exists $0<\delta_{\epsilon}< (\epsilon/(3C))^2 $ such that for all $y\in[a,b]$, with $ \left|y-x\right| < \delta_{\epsilon},$ we have
\begin{equation}\label{equality2}
\left|u_c(x)-u_c(y)\right|< \epsilon/3.
\end{equation}
Hence,
\begin{equation}\label{equality3}
\left|\phi_{n_k}(x)-u_c(x)\right| \leq \left|\phi_{n_k}(x)-\phi_{n_k}(y)\right| +\left|\phi_{n_k}(y)-u_c(y)\right| +\left|u_c(y)-u_c(x)\right|.
\end{equation}
The first term can be estimated by using \eqref{equality1} as follows:
\begin{equation}\label{equality4}
\begin{split}
\left|\phi_{n_k}(x)-\phi_{n_k}(y)\right|=&\left|\int_y^x \phi'_{n_k}(s)\, ds\right| \leq \Big(\int_y^x \left|\phi'_{n_k}(s)\right|^2\, ds\Big)^{1/2}\left|x-y\right|^{1/2}\\ \leq &\left\|\phi_{n_k}\right\|_{\Dom{\E}}\left|x-y\right|^{1/2} \leq C\cdot{\epsilon}/{(3C)}
\leq \epsilon/3.
\end{split}
\end{equation}
Substituting \eqref{equality2} and  \eqref{equality4} into  \eqref{equality3}, we get
$$
\left|\phi_{n_k}(x)-u_c(x)\right| \leq \epsilon/3 +\left|\phi_{n_k}(y)-u_c(y)\right| +\epsilon/3
$$
for all $y\in(x-\delta_{\epsilon},x+\delta_{\epsilon})$.

Last, let $y\in (a,b) $ satisfy $\lim_{k\rightarrow \infty}\phi_{n_k}(y)=u_c(y).$ Then, for all $k$ sufficient large,
we have  $\left|\phi_{n_k}(y)-u_c(y)\right| < \epsilon/3$, and hence $\left|\phi_{n_k}(x)-u_c(x)\right|< \epsilon.$ Thus,  $\lim_{k\rightarrow \infty}\phi_{n_k}(x)=u_c(x)$ for all $x\in[a,b]$.
\end{proof}

\begin{coro}\label{Cor:continous-representative}
Let $u \in H_0^1(a,b)$ and let $\bar{u}$ be its unique $L^2_{\mu}[a,b]$ representative. Then $u_c$ lies in the equivalence class of $\bar u$ in $L_\mu^2[a,b]$.
\end{coro}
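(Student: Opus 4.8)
The plan is to produce a single sequence of test functions that converges to $u_c$ pointwise on $[a,b]$ and, simultaneously, to $\bar u$ in $L^2_\mu[a,b]$, and then to reconcile the two limits along a common subsequence.

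First I would fix a sequence $\{\phi_n\}\subset C_c^\infty(a,b)$ with $\phi_n\to u$ in $H_0^1(a,b)$; such a sequence exists since $C_c^\infty(a,b)$ is dense in $H_0^1(a,b)$ by definition. Applying the preceding Proposition to $\{\phi_n\}$ yields a subsequence $\{\phi_{n_k}\}$ with $\phi_{n_k}\to u_c$ everywhere on $[a,b]$.

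Next I would identify the $L^2_\mu$-limit of the same sequence. Since $\iota:H_0^1(a,b)\to L^2_\mu[a,b]$ is bounded and $\iota(u)=\bar u$, the convergence $\phi_n\to u$ in $H_0^1(a,b)$ gives $\iota(\phi_n)\to\bar u$ in $L^2_\mu[a,b]$. For $\phi_n\in C_c^\infty(a,b)$ the $L^2_\mu[a,b]$-representative of $\phi_n$ is $\phi_n$ itself, i.e.\ $\overline{\phi_n}=\phi_n$ in $L^2_\mu[a,b]$: the constant approximating sequence $u_m\equiv\phi_n$ satisfies condition (1) in the construction of the $L^2_\mu[a,b]$-representative, and $\phi_n$ satisfies \eqref{eq:condition_C}. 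Hence $\iota(\phi_n)=\phi_n$, so $\phi_n\to\bar u$ in $L^2_\mu[a,b]$, and in particular the subsequence $\phi_{n_k}\to\bar u$ in $L^2_\mu[a,b]$. From this I would extract a further subsequence $\{\phi_{n_{k_j}}\}$ converging to $\bar u$ for $\mu$-a.e.\ $x\in[a,b]$. Since this is still a subsequence of $\{\phi_{n_k}\}$, it also converges to $u_c$ at every point; comparing the two limits gives $u_c(x)=\bar u(x)$ for $\mu$-a.e.\ $x$, which is exactly the statement that $u_c$ lies in the equivalence class of $\bar u$ in $L^2_\mu[a,b]$.

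The only point requiring care is the identification $\overline{\phi_n}=\phi_n$ in $L^2_\mu[a,b]$ for $\phi_n\in C_c^\infty(a,b)$, which is the bridge between the $H_0^1(a,b)$ picture and the $L^2_\mu[a,b]$ picture; it follows at once from the characterization of the $L^2_\mu[a,b]$-representative recalled before the Proposition, but it deserves to be spelled out. The remaining ingredients—density of $C_c^\infty(a,b)$ in $H_0^1(a,b)$, continuity of $\iota$, and passage to a $\mu$-a.e.\ convergent subsequence from $L^2_\mu$-convergence—are standard, so I do not expect any genuine obstacle beyond bookkeeping of subsequences.
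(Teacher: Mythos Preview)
Your argument is correct and is exactly the intended one: the paper states the result without proof as an immediate corollary of the preceding Proposition together with the defining properties of $\bar u$, and your write-up supplies precisely those details. One small simplification: condition~(1) in the definition of $\bar u$ already hands you a sequence $\{u_n\}\subset C_c^\infty(a,b)$ with $u_n\to u$ in $H_0^1(a,b)$ and $u_n\to\bar u$ in $L^2_\mu[a,b]$, so you need not separately invoke the boundedness of $\iota$ or verify $\iota(\phi_n)=\phi_n$.
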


\begin{coro}
If $\,{\rm supp}(\mu)=[a,b]$, then $\iota:H_0^1(a,b) \rightarrow L^2_{\mu}[a,b]$ is injective. Consequently, $\Dom\E =H_0^1(a,b)$.
\end{coro}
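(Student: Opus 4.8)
The plan is to read off the result from Corollary~\ref{Cor:continous-representative} together with the full-support hypothesis. Suppose $u\in H_0^1(a,b)$ lies in the kernel of $\iota$, i.e.\ $\Vert\iota(u)\Vert_\mu=\Vert\bar u\Vert_\mu=0$. By Corollary~\ref{Cor:continous-representative} the continuous representative $u_c$ of $u$ lies in the equivalence class of $\bar u$ in $L^2_\mu[a,b]$, so $u_c=0$ $\mu$-a.e.\ on $[a,b]$. The core of the argument is then the elementary fact that a continuous function on $[a,b]$ vanishing $\mu$-a.e.\ must vanish identically when $\mathrm{supp}(\mu)=[a,b]$: if $u_c(x_0)\ne 0$ for some $x_0\in[a,b]$, continuity supplies a relatively open neighborhood $U\subseteq[a,b]$ of $x_0$ on which $|u_c|\ge |u_c(x_0)|/2>0$; since $x_0\in\mathrm{supp}(\mu)$ we have $\mu(U)>0$, and hence $\int_a^b|u_c|^2\,d\mu\ge (|u_c(x_0)|/2)^2\,\mu(U)>0$, contradicting $\Vert u_c\Vert_\mu=0$. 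Therefore $u_c\equiv 0$ on $[a,b]$.

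Because $u_c$ is simultaneously the continuous representative of the $H_0^1(a,b)$-equivalence class of $u$, the conclusion $u_c\equiv 0$ forces $u=0$ in $H_0^1(a,b)$. Thus $\ker\iota=\{0\}$, i.e.\ $\iota$ is injective. In particular the subspace $\mathcal{N}=\{u\in H_0^1(a,b):\Vert\iota(u)\Vert_\mu=0\}$ introduced above is trivial, so $\mathcal{N}^\bot=H_0^1(a,b)$, and by the definition $\Dom\E:=\mathcal{N}^\bot$ we get $\Dom\E=H_0^1(a,b)$, as claimed.

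The only step that needs genuine care is the passage from ``vanishes $\mu$-a.e.''\ to ``vanishes identically'': this is precisely where the hypothesis $\mathrm{supp}(\mu)=[a,b]$ is used, and it would fail for a measure supported on a proper subset of $[a,b]$. Everything else is bookkeeping with the identifications ($\iota$, $\bar u$, $u_c$, $\mathcal{N}$) already set up in this section, and requires no new estimates beyond condition \eqref{eq:condition_C}, which was invoked earlier to make the whole framework meaningful.
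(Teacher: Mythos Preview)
Your proof is correct and follows essentially the same route as the paper's own argument: use Corollary~\ref{Cor:continous-representative} to identify $\bar u$ with the continuous representative $u_c$, then use continuity together with the full-support hypothesis to conclude $u_c\equiv 0$, hence $u=0$ in $H_0^1(a,b)$. You have simply spelled out in detail the step from ``$u_c=0$ $\mu$-a.e.'' to ``$u_c\equiv 0$'' that the paper leaves implicit, and made explicit the final deduction that $\mathcal N=\{0\}$ and hence $\Dom\E=H_0^1(a,b)$.
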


\begin{proof}
Let $u \in H_0^1(a,b)$ such that $\iota(u)=0$. Then we have $\bar{u}=u_c=0$ in $L^2_{\mu}[a,b]$.
Since ${\rm supp}(u)=[a,b],$ we have $u_c\equiv 0$ on $[a,b]$.
Thus, $u=0$ Lebesgue a.e. on $[a,b].$
\end{proof}

For a function $\varphi:(a,b)\to\R$, we let $\varphi'$ denote both its classical and weak derivatives. If $u\in L^2(0,T; X)$, where $X$ is $H^1_0(a,b)$, or $L^2_{\mu}[a,b]$ etc., then for each fixed $t$ we denote by $u_x(x,t)$ (or $\nabla u$) the classical or weak derivative of $u$ with respect to $x$.

The spaces $\Dom\E, L_\mu^2[a,b], (\Dom\E)'$ form a \textit{Gelfand triple}:
$$
\Dom\E\ \hookrightarrow\  L_\mu^2[a,b]\cong (L_\mu^2[a,b])'\ \hookrightarrow\ (\Dom\E)',
$$
where we identify $L_\mu^2[a,b]$ with $(L_\mu^2[a,b])'$. The embedding $L_\mu^2[a,b]\hookrightarrow (\Dom\E)'$ is given by
$$
w\in L_\mu^2[a,b]\mapsto (w,\cdot)_\mu\in(L_\mu^2[a,b])'\subset(\Dom\E)'.
$$

We define weak solution of the IBVP (\ref{eq:IBVP}) (see, e.g., \cite{Evans_2010, Wloka_1987}).

\begin{defi}\label{D:weak-solution}
Let $g \in \Dom\E,$ $h \in L^2_{\mu}[a,b]$, and $f \in L^2(0,T;\Dom\E)$.
A function $u \in L^2(0,T;\Dom\E)$, with
$u_t \in L^2(0,T;L^2_{\mu}[a,b] )$ and $u_{tt} \in
L^2(0,T;(\Dom\E)')$ is a {\emph{weak solution}} of
IBVP (\ref{eq:IBVP})
if the following conditions are satisfied:
\begin{enumerate}
\item[(i)] $\langle u_{tt},v \rangle + \E(u,v) = (f,v)_{\mu}$  for each $v \in \Dom\E$ and Lebesgue a.e. $t\in[0,T]$;
\item[(ii)] $u(x,0)=g(x)$ and $u_t(x,0)=h(x)$ for all $x\in[a,b]$.
\end{enumerate}
Here $\langle\cdot,\cdot\rangle$ denotes the pairing between $(\Dom\E)'$ and $\Dom\E$.
\end{defi}

\begin{rmk}\label{rmk:weak-derivative} (a) In (i) above, if $u_{tt}\in\Dom\E$ or $u_{tt}\in L_\mu^2[a,b]$, then
$\langle u_{tt},v \rangle=(u_{tt},v)_\mu$ as in the definition of Gelfand triple.

(b) Given the Gelfand triple $\Dom\E\hookrightarrow L^2_\mu[a,b]\hookrightarrow(\Dom\E)'$,
for $u\in L^2(0,T;\Dom\E)$ we also have $u\in L^2(0,T; L_\mu^2[a,b])$ and $u\in L^2(0,T; (\Dom\E)')$
and thus $u\in L^1(0,T; L_\mu^2[a,b])$ and $u\in L^1(0,T; (\Dom\E)')$. Therefore, it makes sense to require that $u$ has weak derivatives $u_{t}\in L^1(0,T ;L^2_\mu[a,b])$ and $u_{tt}\in L^1(0,T ;(\Dom\E)')$
and to require in addition that $u_t\in L^2(0,T ;L_\mu^2[a,b])$ and $u_{tt}\in L^2(0,T ;(\Dom\E)')$.
\end{rmk}

Let $V,H$ be Hilbert spaces, where $V$ is separable. Assume that the embedding $V \hookrightarrow H$ is continuous, injective, and dense so that
$$
V\hookrightarrow H\hookrightarrow V'
$$
form a Gelfand triple (see \cite[Section 17]{Wloka_1987}). Let $0<T<\infty$, and assume that for $t\in[0,T]$, $a(t,\varphi,\psi)$ is a continuous sesquilinear form, i.e.,
\begin{equation}\label{E:Wloka_condition1}
|a(t,\varphi,\psi)|\le c\Vert\varphi\Vert_V\Vert\psi\Vert_V,\quad\forall\varphi,\psi\in V,
\end{equation}
where $c>0$ is a constant independent of $t$. Then there exists a representation operator
$$
L(t):V\to V',
$$
such that for each $t$, $L(t)$ is linear and continuous, with
$$
a(t;\varphi,\psi)=(L(t)\varphi,\psi)_H.
$$

Assume that for all $\varphi,\psi\in V$ the function $t\mapsto a(t;\varphi,\psi)$ is continuously differentiable for $t\in[0,T]$, i.e.,
\begin{equation}\label{E:Wloka_condition2}
a(t;\varphi,\psi)\in C^1[0,T],\quad\forall\varphi,\psi\in V,
\end{equation}
where
$$
\Big|\frac{d}{dt}a(t;\varphi,\psi)\Big|\le c\Vert\varphi\Vert_V\Vert\psi\Vert_V,\quad\forall t\in[0,T],
$$
where $c$ is independent of $t$.

Assume further that $a(t;\varphi,\psi)$ is \textit{antisymmetric}, i.e.,
\begin{equation}\label{E:Wloka_condition3}
a(t;\varphi,\psi)=\overline{a(t;\varphi,\psi)},\quad\forall\varphi,\psi\in V.
\end{equation}

Finally, assume \textit{$V$-coersion}, i.e., there exist constants $\alpha,\beta>0$ such that
\begin{equation}\label{E:Wloka_condition4}
a(t;\varphi,\varphi)+\beta\Vert\varphi\Vert_H^2\ge\alpha\Vert\varphi\Vert_V^2,\quad\forall t\in[0,T]\text{ and }\forall\varphi\in V.
\end{equation}

The proof of Theorems \ref{T:exist-unique-Hilbert-sp} and \ref{T:smoothness-Hilbert-sp} stated below can be found in \cite[Sections 29--30]{Wloka_1987}.

\begin{theo}\label{T:exist-unique-Hilbert-sp} Let $V,H$ be Hilbert spaces where $V$ is separable. Assume that the embedding $V\hookrightarrow H$ is injective, continuous, and dense so that $V\hookrightarrow H\hookrightarrow V'$ form a Gelfand triple. Assume conditions \eqref{E:Wloka_condition1}--\eqref{E:Wloka_condition4} above hold. Then for any $f\in L^2(0,T;H)$, $0<T<\infty$, and initial conditions
$$
u_0\in V,\quad u_1\in H,
$$
there exists a unique function $u(t)\in L^2(0,T;V)$, with $du/dt\in L^2(0,T;H)$, so that
\begin{equation}\label{E:HIBV_Hilbert}
\frac{d^2u}{dt^2}+L(t)u=f\quad\text{for }t\in[0,T],\qquad u(0)=u_0,\qquad \frac{du(0)}{dt}=u_1,
\end{equation}
in the sense that
$$
\Big(\frac{d^2u}{dt^2},\varphi\Big)_H+(L(t)u,\varphi)_H=(f,\varphi)_H,\quad\forall\varphi\in V.
$$
\end{theo}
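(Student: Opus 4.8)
The assertion is the classical existence and uniqueness theorem for second-order linear hyperbolic equations in a Gelfand triple, and the route I would take is the Faedo--Galerkin method combined with energy estimates (cf. \cite[Sec.~29--30]{Wloka_1987} and \cite[Ch.~7]{Evans_2010}). Fix a sequence $\{w_k\}_{k\ge1}\subset V$ that is linearly independent, has dense span in $V$, and is orthonormal in $H$; set $V_n=\mathrm{span}\{w_1,\dots,w_n\}$ and let $P_n$ be the $H$-orthogonal projection onto $V_n$. I would look for $u_n(t)=\sum_{k=1}^n\xi_k^n(t)w_k$ solving
\[
\Big(\frac{d^2u_n}{dt^2},w_k\Big)_H+a(t;u_n,w_k)=(f,w_k)_H,\quad 1\le k\le n,\qquad u_n(0)=P_nu_0,\ \ \frac{du_n}{dt}(0)=P_nu_1.
\]
By \eqref{E:Wloka_condition1} and \eqref{E:Wloka_condition2} the matrix $\big(a(t;w_j,w_k)\big)_{j,k=1}^{n}$ is continuous in $t$, so this is a linear second-order ODE system with continuous coefficients and inhomogeneity $t\mapsto(f(t),w_k)_H\in L^2(0,T)$; hence it has a unique global solution on $[0,T]$ by standard linear ODE theory.

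The heart of the argument is the energy estimate. Testing the Galerkin equation with $du_n/dt\in V_n$, using \eqref{E:Wloka_condition3} and the identity $\frac{d}{dt}a(t;u_n,u_n)=\dot a(t;u_n,u_n)+2a(t;du_n/dt,u_n)$, I obtain
\[
\frac{d}{dt}\Big(\Big\|\frac{du_n}{dt}\Big\|_H^2+a(t;u_n,u_n)\Big)=\dot a(t;u_n,u_n)+2\big(f,\tfrac{du_n}{dt}\big)_H.
\]
Bounding $\dot a$ and $a$ via \eqref{E:Wloka_condition1} and \eqref{E:Wloka_condition2}, controlling $\|u_n\|_V^2$ through the coercivity inequality \eqref{E:Wloka_condition4} applied to $u_n(t)$, and estimating $\|u_n(t)\|_H$ by $\|u_0\|_H+\int_0^t\|du_n/ds\|_H\,ds$, Gr\"onwall's inequality gives a bound on $\sup_{[0,T]}\big(\|du_n/dt\|_H^2+\|u_n\|_V^2\big)$ depending only on $\|u_0\|_V$, $\|u_1\|_H$, and $\|f\|_{L^2(0,T;H)}$, uniformly in $n$; and then, reading the equation back and splitting a test function $\varphi\in V$ as $P_n\varphi+(\varphi-P_n\varphi)$, a uniform bound for $d^2u_n/dt^2$ in $L^2(0,T;V')$.

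From these bounds I would pass to a subsequence with $u_n\stackrel{*}{\rightharpoonup}u$ in $L^\infty(0,T;V)$, $du_n/dt\stackrel{*}{\rightharpoonup}u_t$ in $L^\infty(0,T;H)$, and $d^2u_n/dt^2\rightharpoonup u_{tt}$ in $L^2(0,T;V')$; testing the Galerkin equations against $\phi(t)w_k$ with $\phi\in C_c^\infty(0,T)$ and then using density of $\bigcup_nV_n$ in $V$ shows that $u_{tt}+L(t)u=f$ in the stated weak sense. A standard interpolation lemma (see, e.g., \cite[Ch.~7]{Evans_2010}) identifies $u$, after modification on a null set, with an element of $C(0,T;V)\cap C^1(0,T;H)$, so the initial conditions are meaningful, and they are recovered in the limit because $P_nu_0\to u_0$ in $V$ and $P_nu_1\to u_1$ in $H$. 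For uniqueness I would take the difference $w$ of two solutions: it solves the equation with $f=0$ and zero data but has only $w_{tt}\in L^2(0,T;V')$, so one may not pair $\langle w_{tt},w_t\rangle$ directly. The standard remedy, which I would follow, is to fix $s\in(0,T]$, use the test function $\psi(t)=\int_t^s w(\tau)\,d\tau$ for $t\le s$ and $\psi\equiv0$ for $t>s$, integrate by parts in $t$, and invoke \eqref{E:Wloka_condition3} to produce an energy identity forcing $w\equiv0$.

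I expect the uniqueness step to be the main obstacle: the limited regularity $u_{tt}\in L^2(0,T;V')$ precludes the naive energy computation, and the symmetry hypothesis \eqref{E:Wloka_condition3} must be placed exactly where $\frac{d}{dt}a(t;w,w)$ is formed, on the test function $\psi$ rather than on $w$ itself. The remaining ingredients --- ODE solvability, the Gr\"onwall estimate, and the weak-limit passage --- are routine, once the interpolation lemma putting $u$ in $C(0,T;V)\cap C^1(0,T;H)$ is invoked.
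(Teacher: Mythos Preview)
Your outline is correct and matches the approach the paper relies on: the paper does not supply its own proof of this theorem but simply refers to \cite[Sections~29--30]{Wloka_1987}, where precisely the Faedo--Galerkin construction, energy estimate, weak compactness, and the integrated-test-function uniqueness argument you describe are carried out. Your identification of the uniqueness step as the delicate point, and of the role of the symmetry hypothesis \eqref{E:Wloka_condition3} there, is accurate.
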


\begin{defi}\label{D:Sobolev-spaces}
Let $V$ be a Hilbert space. For each integer $k\ge 0$, define the {\rm Sobolev space}
$$
W_2^k(0,T;V):=\Big\{u:(0,T)\to V \text{ measurable}:\frac{d^nu}{dt^n}\in L^2(0,T;V)\text{ for }0\le n\le k\Big\},
$$
where the differentiation is in the distributional sense. Equip $W_2^k(0,T;V)$ with the norm
$$
\Vert u\Vert_k^2:=\sum_{n=0}^k\int_0^T\Big\Vert\frac{d^nu}{dt^n}\Big\Vert_V^2\,dt.
$$
\end{defi}

The smoothness of the solution of equation \eqref{E:HIBV_Hilbert} increases with that of $f$, as shown in the theorem below.

\begin{theo}\label{T:smoothness-Hilbert-sp}
Assume the same hypotheses of Theorem~\ref{T:exist-unique-Hilbert-sp} and assume that $a(\varphi,\psi)$ and $L$ are independent of $t$. Consider the hyperbolic equation
\begin{equation}\label{E:smoothness-Hilbert-sp-1}
\frac{d^2u}{dt^2}+Lu=f\quad\text{for }t\in(0,T),
\end{equation}
with the initial conditions
\begin{equation}\label{E:smoothness-Hilbert-sp-2}
u(0)=u_0,\qquad \frac{du(0)}{dt}=u_1.
\end{equation}
Assume, in addition, that
$$
f\in W_2^k(0,T;H),\quad k\ge 1,
$$
and
$$
u_0, u_1,f''(0),\dots, f^{(k-3)}(0)\in V\quad\text{and}\quad f^{(k-2)}(0)-Lf^{(k-3)}(0)\in H.
$$
Then the solution $u$ of \eqref{E:smoothness-Hilbert-sp-1} and \eqref{E:smoothness-Hilbert-sp-2} satisfies
$$
u\in W_2^{k-1}(0,T;V),\quad\frac{d^ku(t)}{dt^k}\in L^2(0,T;H),\quad\frac{d^{k+1}u(t)}{dt^{k+1}}\in L^2(0,T;V').
$$
\end{theo}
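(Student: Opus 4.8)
To prove Theorem~\ref{T:smoothness-Hilbert-sp}, the plan is to argue by induction on $k$, each step lowering the order of the equation by differentiating in time; the base case $k=1$ is essentially Theorem~\ref{T:exist-unique-Hilbert-sp}, and the whole argument is ultimately carried by the Galerkin method with higher-order energy estimates (this is the scheme of the proof cited from \cite[Sections~29--30]{Wloka_1987}). For $k=1$ the assumption $f\in W_2^1(0,T;H)$ just says $f,f'\in L^2(0,T;H)$ and the conditions on the traces $f^{(j)}(0)$ are vacuous, so Theorem~\ref{T:exist-unique-Hilbert-sp} already gives a unique $u\in L^2(0,T;V)=W_2^0(0,T;V)$ with $du/dt\in L^2(0,T;H)$; and $d^2u/dt^2=f-Lu\in L^2(0,T;V')$ because $f\in L^2(0,T;H)\hookrightarrow L^2(0,T;V')$ and $L\colon V\to V'$ is bounded by \eqref{E:Wloka_condition1}.

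For the inductive step, suppose the statement at level $k-1$ and let the level-$k$ hypotheses hold; set $z:=d^{k-1}u/dt^{k-1}$. Differentiating the equation $k-1$ times, $z$ solves (in the appropriate sense, see below) a problem of the same type,
$$
\frac{d^2z}{dt^2}+Lz=f^{(k-1)}\quad\text{on }(0,T),\qquad z(0)=\frac{d^{k-1}u}{dt^{k-1}}(0),\qquad \frac{dz}{dt}(0)=\frac{d^{k}u}{dt^{k}}(0),
$$
where the two initial values are built recursively from $u_0,u_1$ and the traces $f^{(j)}(0)$ by repeated use of $d^{j+2}u/dt^{j+2}(0)=f^{(j)}(0)-L\,d^ju/dt^j(0)$. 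The compatibility conditions are exactly what forces $z(0)\in V$ and $dz/dt(0)\in H$; granting this, and since $f^{(k-1)}\in L^2(0,T;H)$, the base case applied to the $z$-problem produces a solution in $L^2(0,T;V)$ with time-derivative in $L^2(0,T;H)$, which is identified with $d^{k-1}u/dt^{k-1}$. Hence $d^{k-1}u/dt^{k-1}\in L^2(0,T;V)$ and $d^ku/dt^k\in L^2(0,T;H)$; together with the level-$(k-1)$ conclusion $u\in W_2^{k-2}(0,T;V)$ this gives $u\in W_2^{k-1}(0,T;V)$, and $d^{k+1}u/dt^{k+1}=f^{(k-1)}-L\,d^{k-1}u/dt^{k-1}\in L^2(0,T;V')$ once more by boundedness of $L\colon V\to V'$. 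This closes the induction.

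To make the differentiation of the equation and the identification above rigorous, one runs the whole scheme on finite-dimensional Galerkin approximations $u_n$ in a basis of $V$ --- for instance the eigenfunctions of $L$, which in the setting of this paper ($L=-\Delta_\mu$) form an orthonormal basis of $H$, so one could equally expand $u$ itself in that basis and estimate the resulting scalar ODEs. Differentiating the Galerkin equation $j$ times in $t$, testing against $d^{j+1}u_n/dt^{j+1}$, and using the symmetry \eqref{E:Wloka_condition3} (with $a$ independent of $t$) yields, writing $u_n^{(i)}:=d^iu_n/dt^i$,
$$
\tfrac{1}{2}\frac{d}{dt}\Big(\|u_n^{(j+1)}\|_H^2+a(u_n^{(j)},u_n^{(j)})\Big)=(f^{(j)},u_n^{(j+1)})_H;
$$
integrating, using the coercivity \eqref{E:Wloka_condition4}, and applying Gronwall gives bounds uniform in $n$ for $u_n^{(j)}$ in $L^\infty(0,T;V)$ and $u_n^{(j+1)}$ in $L^\infty(0,T;H)$, $0\le j\le k-1$, provided $f^{(j)}\in L^2(0,T;H)$ and the initial data $u_n^{(j)}(0)$ approximate the traces $d^ju/dt^j(0)$ in $V$ (resp.\ $H$). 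A weak-$*$ limit then produces a solution with the asserted regularity, and the uniqueness in Theorem~\ref{T:exist-unique-Hilbert-sp} identifies it with $u$.

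I expect the real difficulty to lie not in the energy estimates or the limiting argument but in the bookkeeping of the compatibility conditions: one has to check that the recursively defined traces $d^ju/dt^j(0)=f^{(j-2)}(0)-L\,d^{j-2}u/dt^{j-2}(0)$ really do lie in $V$ for $j\le k-1$ and in $H$ for $j=k$, and that this is equivalent to the hypotheses exactly as stated --- the careful index-chasing carried out in \cite{Wloka_1987}.
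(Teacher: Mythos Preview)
Your proposal is correct and follows exactly the approach the paper relies on: the paper does not give its own proof of this theorem but simply cites \cite[Sections~29--30]{Wloka_1987}, and what you have sketched---induction on $k$ via time-differentiation of the equation, Galerkin approximation, higher-order energy identities using the symmetry and coercivity of $a$, Gronwall, and weak-$*$ compactness---is precisely Wloka's argument. Your closing remark is also apt: the only genuinely delicate point is the index-chasing that matches the stated compatibility conditions to the requirements $z(0)\in V$, $z'(0)\in H$ at each stage, and that is carried out in the cited reference.
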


\begin{proof}[Proof of Theorem~\ref{T:main-1}]
In order to apply Theorem~\ref{T:exist-unique-Hilbert-sp}, we let $V=\Dom\E$, $H=L^2_\mu[a,b]$, and let $a(t;u,v)=\E(u,v)$, which independent of $t$.
Then for all $u,v\in\Dom\E$,
$$
\big|a(t;u,v)\big|
=\Big|\int u'v'\,dx\Big|
\le\Big(\int|u'|^2\,dx\Big)^{1/2}\Big(\int|v'|^2\,dx\Big)^{1/2}
=\Vert u\Vert_{\Dom(\E)}\Vert v\Vert_{\Dom(\E)},
$$
and thus condition \eqref{E:Wloka_condition1} holds. Also, $\E$ is bilinear. Thus, there exists a 
representation operator $L:\Dom\E\to(\Dom\E)'$ such that
$$
\E(u,v)=(Lu,v)_{L^2_\mu[a,b]}.
$$
$L=-\Delta_\mu$ on $\Dom(-\Delta_\mu)$.

Next, since $t\mapsto a(t;u,v)=\E(u,v)$ is constant in time and real valued, conditions  \eqref{E:Wloka_condition2} and \eqref{E:Wloka_condition3} clearly hold.

Lastly, for all $t\in[0,T]$ and $u\in V$,
$$
a(t;u,u)+\Vert u\Vert_{L^2_\mu[a,b]}^2\ge\E(u,u)=\Vert u\Vert_{\Dom\E}^2,
$$
and thus $\Dom\E$-coersion (condition~\eqref{E:Wloka_condition4}) holds with $\alpha=\beta=1$. Theorem~\ref{T:main-1} now follows from Theorem~\ref{T:exist-unique-Hilbert-sp}.
\end{proof}

As a consequence of Theorem~\ref{T:smoothness-Hilbert-sp}, we have the following regularity result for solutions of homogeneous wave equations in our setting.

\begin{theo}\label{T:smoothness}
Assume the same hypotheses of Theorem~\ref{T:main-1} and assume, in addition, that $h\in\Dom\E$ and $f=0$. Then the solution of the homogeneous equation \eqref{eq:IBVP} satisfies:
$$
u\in W_2^{k-1}(0,T;\Dom\E),\quad\frac{d^k u}{dt^k}\in L^2(0,T;L^2_\mu[a,b]),\quad \frac{d^{k+1} u}{dt^{k+1}}\in L^2(0,T;(\Dom\E)'),\quad k\ge 1.
$$
\end{theo}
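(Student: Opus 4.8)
The plan is to obtain Theorem~\ref{T:smoothness} as a direct application of the abstract regularity result, Theorem~\ref{T:smoothness-Hilbert-sp}, using the same dictionary between the concrete and abstract settings that was set up in the proof of Theorem~\ref{T:main-1}. Namely, I would take $V=\Dom\E$, $H=L^2_\mu[a,b]$ (so that $V'=(\Dom\E)'$), let $a(\varphi,\psi)=\E(\varphi,\psi)$, and let $L=-\Delta_\mu$ be the associated representation operator, so that $\E(\varphi,\psi)=(L\varphi,\psi)_\mu$. Since $\E$ does not depend on $t$, both $a$ and $L$ are time-independent, which is exactly the extra standing assumption of Theorem~\ref{T:smoothness-Hilbert-sp}; and conditions \eqref{E:Wloka_condition1}--\eqref{E:Wloka_condition4} were already checked (with $\alpha=\beta=1$) in the proof of Theorem~\ref{T:main-1}. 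It then remains only to verify the hypotheses on the data in Theorem~\ref{T:smoothness-Hilbert-sp} for each fixed $k\ge1$.

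Because $f\equiv 0$, every hypothesis of Theorem~\ref{T:smoothness-Hilbert-sp} involving $f$ is trivially met: $f\in W_2^k(0,T;H)$ for all $k$, each $f^{(j)}(0)=0\in V$, and $f^{(k-2)}(0)-Lf^{(k-3)}(0)=0\in H$, no matter how one reads the (vacuous, for small $k$) lists of lower-order compatibility terms. The only remaining requirements are $u_0\in V$ and $u_1\in V$, which here read $g\in\Dom\E$ and $h\in\Dom\E$: the former is inherited from the hypotheses of Theorem~\ref{T:main-1}, and the latter is precisely the additional assumption of the present theorem. Hence, for every $k\ge1$, Theorem~\ref{T:smoothness-Hilbert-sp} applies to the abstract equation \eqref{E:HIBV_Hilbert} with $u_0=g$, $u_1=h$, and yields a solution with $u\in W_2^{k-1}(0,T;V)$, $d^k u/dt^k\in L^2(0,T;H)$, and $d^{k+1}u/dt^{k+1}\in L^2(0,T;V')$.

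To conclude, I would note that the weak solution $u$ of \eqref{eq:IBVP} is, by the proof of Theorem~\ref{T:main-1}, exactly the solution of \eqref{E:HIBV_Hilbert} produced by Theorem~\ref{T:exist-unique-Hilbert-sp} under the above identifications; applying Theorem~\ref{T:smoothness-Hilbert-sp} to that same solution therefore upgrades its regularity as stated, and substituting $V=\Dom\E$, $H=L^2_\mu[a,b]$, $V'=(\Dom\E)'$ gives exactly the claimed conclusion. (Equivalently, one checks that the solution furnished by Theorem~\ref{T:smoothness-Hilbert-sp} is a weak solution in the sense of Definition~\ref{D:weak-solution} and invokes the uniqueness part of Theorem~\ref{T:main-1}.) I do not anticipate a genuine obstacle here; the only mildly delicate point is interpreting the compatibility conditions of Theorem~\ref{T:smoothness-Hilbert-sp} when $k\le2$, which is harmless precisely because $f=0$ makes all of them automatic.
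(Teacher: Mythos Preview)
Your proposal is correct and follows essentially the same approach as the paper: the paper simply states that Theorem~\ref{T:smoothness} is a consequence of Theorem~\ref{T:smoothness-Hilbert-sp} under the identifications already made in the proof of Theorem~\ref{T:main-1}, and your argument spells out exactly this, correctly noting that $f=0$ renders all the compatibility conditions on $f$ automatic and that the only new requirement is $h\in\Dom\E$.
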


Theorem~\ref{T:smoothness} will be used in proving Theorem~\ref{T:main-3}.

\section{The finite element method}\label{S:FEM}
\setcounter{equation}{0}

In this section, we let $f=0$ in equation \eqref{eq:IBVP} and use the finite element method to solve the homogeneous IBVP. We only consider self-similar measures $\mu$ (see \eqref{self-similarmeasure}) defined by an IFS  $\{S_i\}_{i=1}^q$ of contractive similitudes of the form
$$
S_i(x)=\rho x+b_i,\qquad i=1,\dots,q.
$$
We assume in addition that $\mu$ satisfies a family of second-order self-similar identities with
respect to an auxiliary IFS $\{T_j\}_{j=1}^{N}$ of the form \eqref{eq:IFS_T_j}. Assume also that $\{T_j\}_{j=1}^{N}$ satisfies the OSC.

For each multi-index $J=(j_1,\dots,j_m)\in\{1,\dots,N\}^m$, we let $T_J[a,b]$ be the interval
$[x_{i-1}, x_i]$, where the index $i$ is obtained directly from $J$ as follows (see \cite{Chen-Ngai_2010}):
$$
i=i(J):=(j_1-1)N^{m-1}+(j_2-1)N^{m-2}+\cdots+(j_m-1)N^0 + 1.
$$
For example, if $J=(1,\dots,1)$, then
$i(J)=1$, and if $J=(N,\dots,N)$, then $i(J)=N^m$. We call $T_J[a,b]$ a \textit{level-$m$ subinterval}.
It follows that
\begin{equation}\label{eq:T_J-formula}
    T_{J_i}[a,b]:=T_J[a,b]=[x_{i-1},x_i] \quad\mathrm{and}\quad
    T_{J_i}(x):=T_J(x)=(x_i-x_{i-1})\frac{x-a}{b-a}+x_{i-1}.
\end{equation}

We apply the finite element method to approximate
the weak solution $u(x,t)$ satisfying \eqref{eq:PDE-weakform-main} by
\begin{equation}\label{eq:u-m-expansion}
u^m(x,t)=\sum_{j=0}^{N^m}\beta_j(t)\phi_j(x),
\end{equation}
where for $j=0,1,\dots,N^m$, $\beta_j(t)=\beta_{m,j}(t)$ are functions to be determined, and $\phi_j(x):=\phi_{m,j}(x)$ are the standard piecewise linear \textit{finite element basis functions} (also called \textit{tent functions}) defined as
\begin{equation}\label{eq:FEM-basis-fn}
\phi_j(x):=\left\{
        \begin{array}{ll}
        \frac{x-x_{j-1}}{x_j-x_{j-1}} & \mathrm{if}\quad x_{j-1}
            \leq x \leq x_j, \quad j = 1,\ldots, N^m\\[7pt]
        \frac{x-x_{j+1}}{x_j-x_{j+1}} & \mathrm{if}\quad x_j
            \leq x \leq x_{j+1}, \quad j = 0,\ldots, N^m-1\\[7pt]
        0 & \mathrm{otherwise}.
        \end{array} \right.
\end{equation}

We require $u^m(x,t)$ to satisfy the integral form of the homogeneous wave equation
\begin{equation}\label{eq:PDE-weakform}
\int^{b}_{a}
u^m_{tt}(x,t)\phi_i(x) \,d\mu=-\int^{b}_{a} \nabla u^m(x,t) \phi'_i(x) \, dx,\quad\text{for } i=0,1,\dots, N^m,
\end{equation}
where $u^m_{tt}:=(u^m)_{tt}$.

As $u^m(a,t)=u^m(b,t)=0$ we have $\beta_0(t)=\beta_{N^m}(t)=0$. Using this and substituting (\ref{eq:u-m-expansion}) into (\ref{eq:PDE-weakform}) gives
\begin{equation}\label{eq:u-m-expansion2}
\sum_{j=1}^{N^m-1}\beta''_j(t)\int^{b}_{a}
\phi_i(x)\phi_j(x)\,d\mu=-\sum_{j=1}^{N^m-1}\beta_j(t)\int^{b}_{a}\phi'_i(x)\phi'_j(x)\, dx,\quad 1\le i\le N^m-1.
\end{equation}

We define the \textit{mass matrix}
$\mathbf M=(M_{ij})$ and \textit{stiffness matrix} $\mathbf K=(K_{ij})$ respectively as
\begin{equation}\label{eq:mass-stiffness}
M_{ij}=\int^{b}_{a}\phi_i(x) \phi_j(x)\,d\mu,\qquad
K_{ij}= -\int^{b}_{a}\phi'_i(x)\phi'_j(x)\, dx,\qquad 1\le i,j\le N^m-1.
\end{equation}
Both $\mathbf{M}$ and $\mathbf{K}$ are tridiagonal and of order $(N^m-1)\times (N^m-1)$.
Let
 \begin{equation*}
 \mathbf{w}(t)=:\begin{bmatrix}
 w_1(t)\\
 \vdots \\
w_{N^m-1}(t)
 \end{bmatrix}=\begin{bmatrix}
 \beta_1(t)\\
 \vdots \\
 \beta_{N^m-1}(t)
 \end{bmatrix}.
 \end{equation*}
be a vector-valued function. Then (\ref{eq:u-m-expansion2}) can be put in a matrix form as
  \begin{equation}\label{eq:matrixform2}
  \mathbf{M}\mathbf{w''}=-\mathbf{K} \mathbf{w}.
   \end{equation}
This gives us a system of second-order linear ODEs with constant coefficients.
We need two initial conditions. The initial
condition $u(x,0)= g(x)$ for {$a\le x\le b$}
can be approximated by its linear interpolant:
    \begin{equation*}
  \tilde{g}(x)= \sum_{i=1}^{N^m-1}g(x_i)\phi_i(x).
   \end{equation*}
Therefore, we set
 \begin{equation*}
w_i(0)= g(x_i)\qquad\text{and}\qquad
w'_i(0)= h(x_i).
   \end{equation*}
These lead to the initial conditions
\begin{equation}\label{eq:initial_cond}
 \mathbf{w}(0)=\mathbf{w}_0=\begin{bmatrix}
 g(x_1)\\
 \vdots \\
 g(x_{N^m-1})
\end{bmatrix}
 ,\qquad \mathbf{w'}(0)=\mathbf w'_0=
 \begin{bmatrix}
 h(x_1)\\
 \vdots \\
 h(x_{N^m-1})
\end{bmatrix}.
\end{equation}
Consequently, we get the linear system
\begin{equation}\label{eq:IVP}
\left\{
        \begin{array}{l}
\mathbf{M} \dfrac{d^2\mathbf{w}}{dt^2}=-\mathbf{K}\mathbf{w},\quad t>0 \\ \\
\mathbf{w}(0)=\mathbf{w}_0, \quad \mathbf w'(0)=\mathbf w'_0.\\
\end{array} \right.
\end{equation}

We describe how to compute $\mathbf M$; the matrix $\mathbf K$ can be computed directly. By using the definition of the $\phi_i$'s and \eqref{eq:T_J-formula}, we have
\begin{equation}\label{eq:M_entry}
\begin{aligned}
&M_{i,i}=\frac{1}{(b-a)^2}\Big(\int^{b}_{a}(x-a)^2\,d\mu\circ T_{J_i}+\int^{b}_{a}(b-x)^2\,d\mu\circ T_{J_{i+1}}\Big),\quad 1\le i\le N^m-1,\\
&M_{i,i-1}=\frac{1}{(b-a)^2}\int^{b}_{a}(x-a)(b-x)\,d\mu\circ T_{J_{i}},\quad 2\le i\le N^m-1,\\
&M_{i,i+1}=
\frac{1}{(b-a)^2}\int^{b}_{a}(x-a)(b-x)\,d\mu\circ T_{J_{i+1}},\quad 1\le i\le N^m-2.
\end{aligned}
\end{equation}

Define
\begin{equation}\label{eq:I_J}
\I_{k,j}:=\int_a^bx^k\,d\mu\circ T_j,\qquad \J_{k,j}:=\int_{T_j[a,b]}x^k\,d\mu,\qquad k=0,1,2,\ j=1,\dots, N.
\end{equation}
We will regard the $\I_{k,j}$ and $\J_{k,j}$ as known constants. In fact, for all examples we study, they can be computed exactly (see Section~\ref{S:Frac}). A sufficient condition for computing them explicitly is given in \cite{Chen-Ngai_2010}.

\begin{lemma}\label{P:M_determine_integrals}
The matrix $\mathbf M$ is completely determined by the integrals $\I_{k,j}$, or equivalently, $\J_{k,j}$, where $k=0,1,2$ and $j=1,\dots, N$.
\end{lemma}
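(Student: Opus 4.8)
The plan is to reduce, via the explicit formulas \eqref{eq:M_entry}, every entry of $\mathbf M$ to integrals of the form $\int_a^b x^k\,d(\mu\circ T_J)$ with $k\in\{0,1,2\}$ and $J$ a level-$m$ multi-index, and then to ``peel'' $J$ apart using the matrix form \eqref{2ndorderidentity} of the second-order identities until only the base integrals $\I_{k,1},\dots,\I_{k,N}$ remain. First I would record that $(x-a)^2$, $(b-x)^2$ and $(x-a)(b-x)$ are quadratic polynomials in $x$, so by \eqref{eq:M_entry} and linearity each of the entries $M_{i,i}$, $M_{i,i-1}$, $M_{i,i+1}$ is a linear combination --- with coefficients depending only on $a$ and $b$ --- of the quantities
\[
\I_{k,J}:=\int_a^b x^k\,d(\mu\circ T_{J}),\qquad k=0,1,2,
\]
for $J=J_i$ and $J=J_{i+1}$. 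Thus it suffices to show that for every multi-index $J=(j_1,\dots,j_m)$ one has $\I_{k,J}=\sum_{\ell=1}^N c^{(J)}_\ell\,\I_{k,\ell}$, where the constants $c^{(J)}_\ell$ depend only on the matrices $M_1,\dots,M_N$ of the second-order identities and in particular not on $k$ or on $\mu$.

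For the key step, recall that $T_J=T_{j_1}\circ\cdots\circ T_{j_m}$, that under the standing assumptions of this section each $T_j$ maps $[a,b]=\supp(\mu)$ into itself, and write $\v(A):=\big(\mu(T_1A),\dots,\mu(T_NA)\big)^{\mathsf T}$ for a Borel set $A\subseteq[a,b]$. The matrix form \eqref{2ndorderidentity} is exactly the statement $\v(T_jA)=M_j\,\v(A)$ for $j=1,\dots,N$. Iterating (the empty product being the identity matrix when $m=1$),
\[
(\mu\circ T_J)(A)=\mu\big(T_{j_1}(T_{j_2}\cdots T_{j_m}A)\big)=\mathbf e_{j_1}\,\v\big(T_{j_2}\cdots T_{j_m}A\big)=\mathbf e_{j_1}M_{j_2}\cdots M_{j_m}\,\v(A),
\]
where $\mathbf e_{j_1}$ is the $j_1$-th row of the $N\times N$ identity matrix. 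Setting $(c^{(J)}_1,\dots,c^{(J)}_N):=\mathbf e_{j_1}M_{j_2}\cdots M_{j_m}$, this holds for every Borel $A\subseteq[a,b]$, i.e.\ $\mu\circ T_J=\sum_{\ell=1}^N c^{(J)}_\ell\,(\mu\circ T_\ell)$ as (signed) Borel measures on $[a,b]$; integrating $x^k$ against both sides gives $\I_{k,J}=\sum_{\ell=1}^N c^{(J)}_\ell\,\I_{k,\ell}$, as required. Together with the previous paragraph this proves $\mathbf M$ is determined by $\{\I_{k,j}:k=0,1,2,\ j=1,\dots,N\}$.

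To see the equivalence with the $\J_{k,j}$, use that $T_j(t)=\rho^{n_j}t+d_j$ maps $[a,b]$ bijectively onto $T_j[a,b]$; the change of variables $y=T_j(t)$ gives
\[
\J_{k,j}=\int_{T_j[a,b]}y^k\,d\mu(y)=\int_a^b (T_jt)^k\,d(\mu\circ T_j)(t)=\sum_{\ell=0}^k\binom{k}{\ell}\rho^{n_j\ell}d_j^{\,k-\ell}\,\I_{\ell,j}.
\]
For each fixed $j$ this is a triangular linear change of coordinates with nonzero diagonal entries $\rho^{n_j k}$ relating $(\I_{0,j},\I_{1,j},\I_{2,j})$ and $(\J_{0,j},\J_{1,j},\J_{2,j})$, hence invertible, so the two families carry the same information.

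I expect the only genuine subtlety to be the bookkeeping around the measure $\mu\circ T_J$: one must make sure it is legitimately a Borel measure on $[a,b]$ and that \eqref{2ndorderidentity} may be applied repeatedly --- this is where the OSC for $\{T_j\}$ enters, guaranteeing that the level-$m$ cells $T_J[a,b]$ tile $[a,b]$ and that the affine map $T_{J_i}$ appearing in \eqref{eq:T_J-formula} is indeed the composition $T_{j_1}\circ\cdots\circ T_{j_m}$ --- together with keeping the order of the matrix product $M_{j_2}\cdots M_{j_m}$ straight. Everything else is elementary linear algebra.
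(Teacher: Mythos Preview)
Your proof is correct and follows essentially the same route as the paper: iterate the second-order identities to express $\mu\circ T_J$ as the row vector $\mathbf e_{j_1}M_{j_2}\cdots M_{j_m}$ applied to $(\mu\circ T_1,\dots,\mu\circ T_N)$, observe that the entries of $\mathbf M$ depend only on the moments $\int_a^b x^k\,d(\mu\circ T_J)$ for $k=0,1,2$, and then relate the $\I_{k,j}$ and $\J_{k,j}$ by a change of variables. Your write-up is in fact a bit more explicit than the paper's --- you spell out the triangular, hence invertible, linear relation between $(\I_{0,j},\I_{1,j},\I_{2,j})$ and $(\J_{0,j},\J_{1,j},\J_{2,j})$, whereas the paper simply records the two change-of-variable formulas and leaves the invertibility implicit.
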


\begin{proof}
For $J=(j_1,\dots,j_m)\in\{1,\dots,N\}^m$, iterating \eqref{2ndorderidentity} shows that for any Borel subset  $A\subseteq{\rm supp}(\mu)$,
\begin{equation}\label{E:iterated_2nd_order_id}
\mu(T_JA)=c_J\begin{bmatrix}\mu(T_1A)\\
\vdots\\
\mu(T_NA)\end{bmatrix},
\end{equation}
where $c_J:=[c_J^1,\dots,c_J^N]:=\emph{\textbf{e}}_{j_1}M_{j_2}\cdots M_{j_m}$. That is,
\begin{equation}\label{eq:mu_T_J}
\mu(T_JA)=\sum_{j=1}^Nc_J^i\mu(T_jA).
\end{equation}
In view of the fact that $\mathbf M$ is tridiagonal, and the expressions for $M_{i,i}$, $M_{i,i-1}$, and  $M_{i,i+1}$, the entries of $\mathbf M$ are completely determined by the integrals
$$
\int_a^bx^k\,d\mu\circ T_J,\quad k=0,1,2,\quad J\in\{1,\dots, N\}^m,
$$
which, by virtue of \eqref{eq:mu_T_J}, can be written as
$$
\sum_{j=1}^Nc_J^i\int_a^bx^k\,d\mu\circ T_j.
$$
This proves that $\mathbf M$ is determined by the $\I_{k,j}$. Lastly, since
$$
\int_a^bx^k\,d\mu\circ T_j=\int_{T_j[a,b]}(T_j^{-1}x)^k\,d\mu\quad\text{and}\quad
\int_{T_j[a,b]}x^k\,d\mu=\int_a^b(T_jx)^k\,d\mu\circ T_j,
$$
$\mathbf M$ is also determined by the $\J_{k,j}$.
\end{proof}

The system in \eqref{eq:IVP} has a unique solution if $\mathbf{M}$ is invertible.

\begin{prop}\label{P:M-invertible} Assume that ${\rm supp}(\mu)=[a,b]$. Then the mass matrix $\mathbf M$ is invertible. Consequently, (\ref{eq:IVP}) has a unique solution $\mathbf w(t)$; moreover, ${\beta_j(t)}\in C^2(0,T)$ for $j=1,\dots,N^m-1$.
\end{prop}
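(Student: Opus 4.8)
The plan is to recognize $\mathbf M$ as the Gram matrix of the interior tent functions $\phi_1,\dots,\phi_{N^m-1}$ in the space $L^2_\mu[a,b]$ and to show it is positive definite, which is where the hypothesis $\operatorname{supp}(\mu)=[a,b]$ enters. First I would note that for any column vector $\mathbf c=(c_1,\dots,c_{N^m-1})^{T}\in\R^{N^m-1}$, setting $\psi:=\sum_{j=1}^{N^m-1}c_j\phi_j$, we have
$$
\mathbf c^{T}\mathbf M\mathbf c=\sum_{i,j}c_ic_j\int_a^b\phi_i\phi_j\,d\mu=\int_a^b\psi^2\,d\mu=\|\psi\|_\mu^2\ge 0,
$$
so $\mathbf M$ is symmetric and positive semidefinite, and it remains to rule out $\|\psi\|_\mu=0$ for $\mathbf c\ne 0$.

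The key step is the following: $\psi$ is continuous on $[a,b]$, piecewise linear with respect to the nodes $a=x_0<x_1<\cdots<x_{N^m}=b$, and by the interpolation property of the tent functions ($\phi_j(x_k)=\delta_{jk}$ for $1\le j\le N^m-1$, with $\phi_j(a)=\phi_j(b)=0$) we have $\psi(x_k)=c_k$ for $1\le k\le N^m-1$. If $\|\psi\|_\mu=0$, then $\psi=0$ $\mu$-a.e. Since $\psi$ is continuous and $\operatorname{supp}(\mu)=[a,b]$, this forces $\psi\equiv 0$ on $[a,b]$: otherwise $\psi$ would be nonzero on some open subinterval of $[a,b]$, which has strictly positive $\mu$-measure because it meets the support. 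Hence $c_k=\psi(x_k)=0$ for all $k$, i.e. $\mathbf c=0$. Therefore $\mathbf M$ is positive definite and in particular invertible. (This is essentially the $L^2_\mu$-linear independence of $\phi_1,\dots,\phi_{N^m-1}$, analogous to Corollary~\ref{Cor:continous-representative} and the injectivity statements following it.)

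Finally, with $\mathbf M$ invertible, the system \eqref{eq:IVP} is equivalent to $\mathbf w''=-\mathbf M^{-1}\mathbf K\mathbf w$ with $\mathbf w(0)=\mathbf w_0$, $\mathbf w'(0)=\mathbf w_0'$. Rewriting this as a first-order linear constant-coefficient system in $(\mathbf w,\mathbf w')$ and invoking the standard existence and uniqueness theorem for linear ODEs yields a unique (global) solution $\mathbf w(t)$ on $[0,T]$. Since the right-hand side $-\mathbf M^{-1}\mathbf K\mathbf w$ is a smooth (indeed linear) function of $\mathbf w$, a routine bootstrap shows $\mathbf w\in C^\infty$; in particular each component $\beta_j=w_j$ lies in $C^2(0,T)$. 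I do not anticipate any real obstacle here: the only point requiring care is the passage from "$\psi=0$ $\mu$-a.e." to "$\psi\equiv 0$", which is exactly the place the full-support hypothesis is used, and it is a one-line topological argument.
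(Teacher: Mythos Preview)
Your proof is correct and follows essentially the same approach as the paper: the core idea in both is that non-invertibility of $\mathbf M$ would yield a nonzero piecewise linear function with zero $L^2_\mu$-norm, contradicting $\operatorname{supp}(\mu)=[a,b]$. Your version simply spells out in more detail the Gram-matrix/positive-definiteness framing and the passage from ``$\psi=0$ $\mu$-a.e.'' to ``$\psi\equiv 0$'', which the paper compresses into a single sentence.
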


\begin{proof}
If the mass matrix $\mathbf M$ is not invertible, then there exists a nonzero piece-wise linear function
with zero $L^2_\mu$ norm, which implies that the measure $\mu$ does not have a full support.
\end{proof}

\begin{proof}[{\rm \textit{Proof of Theorem \ref{T:main-2}}}]
This follows by combining the derivations above, Lemma~\ref{P:M_determine_integrals}, and Proposition~\ref{P:M-invertible}.
\end{proof}

We now give another sufficient condition for the matrix $\mathbf{M}$ to be invertible.
If we define
\begin{equation}\label{eq:invertibility-condition1}%% invertible% c1
\begin{aligned}
&p_1(x):=\frac{(x-a)^2}{(b-a)^2},\qquad &&p_2(x):=\frac{(x-a)(2x-a-b)}{(b-a)^2}, \\
&p_3(x):=\frac{(b-x)(a+b-2x)}{(b-a)^2},\qquad &&p_4(x):=\frac{(b-x)^2}{(b-a)^2},
\end{aligned}
\end{equation}
then
\begin{equation}\label{eq:invertibility-condition2}%%invertible% c2
\begin{aligned}
&M_{1,1}-M_{1,2}
=\int^{b}_{a} p_1\,d\mu\circ T_{J_{1}}
+\int^{b}_{a} p_2\,d\mu\circ T_{J_{2}},\\
&M_{i,i}-M_{i,i-1}-M_{i,i+1}
=\int^{b}_{a} p_2\,d\mu\circ T_{J_{i}}
+\int^{b}_{a} p_3\,d\mu\circ T_{J_{i+1}},\quad 2\le i\le N^m-2,\\
&M_{N^m-1,N^m-1}-M_{N^m-1,N^m-2}
=\int^{b}_{a} p_3\,d\mu\circ T_{J_{N^m-1}}
+\int^{b}_{a} p_4\,d\mu\circ T_{J_{N^m}}.\\
\end{aligned}
\end{equation}

We recall that an $n\times n$ complex matrix $A=(a_{ij})$ is \textit{strictly diagonally dominant} if
\begin{equation}\label{eq:3.1}
|a_{ii}|>\sum_{j=1,j\ne i}^n|a_{ij}|\quad\text{for }1\le i\le n.
\end{equation}
It is well known that any $n\times n$ strictly diagonally dominant complex matrix is invertible (see e.g., \cite{Varga_2000}).

\begin{prop}\label{P:FEM}
Let $\mathbf M$ be the mass matrix defined in (\ref{eq:mass-stiffness}) and $p_i, i=1,\dots,4$, be defined as in \eqref{eq:invertibility-condition1}. Assume that
\begin{equation*}
\begin{aligned}
&\int^{b}_{a} p_1\,d\mu\circ T_{J_{1}}+\int^{b}_{a} p_2\,d\mu\circ T_{J_{2}}>0,\quad\int^{b}_{a} p_3\,d\mu\circ T_{J_{N^m-1}}+\int^{b}_{a} p_4\,d\mu\circ T_{J_{N^m}}>0,\quad\text{and}\\
&\int^{b}_{a} p_2\,d\mu\circ T_{J_{i}}+\int^{b}_{a} p_3\,d\mu\circ T_{J_{i+1}}>0,\quad\text{for all}\quad 2\le i\le N^m-2.
\end{aligned}
\end{equation*}
Then $\mathbf M$ is strictly diagonally dominant and thus invertible. Hence the same conclusions of Proposition~\ref{P:M-invertible} hold.
\end{prop}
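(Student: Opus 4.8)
The plan is to show that the hypotheses force the tridiagonal matrix $\mathbf M$ to be strictly diagonally dominant in the sense of \eqref{eq:3.1}, and then to quote the standard fact (see \cite{Varga_2000}) that a strictly diagonally dominant matrix is invertible. Once $\mathbf M$ is known to be invertible, nothing new is needed: the remaining assertions — unique solvability of \eqref{eq:IVP} and $\beta_j\in C^2(0,T)$ — are exactly the conclusions of Proposition~\ref{P:M-invertible} and follow word for word as there.

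The observation that makes everything fit together is that \emph{every} entry of $\mathbf M$ is nonnegative. Indeed, in the formulas \eqref{eq:M_entry} each $\mu\circ T_j$ is a positive Borel measure, while the three integrands $(x-a)^2$, $(b-x)^2$, and $(x-a)(b-x)$ are all $\ge 0$ on $[a,b]$; hence $M_{i,i}\ge 0$ and $M_{i,i\pm 1}\ge 0$. Because $\mathbf M$ is tridiagonal of order $N^m-1$, the diagonal dominance condition \eqref{eq:3.1} reads $|M_{1,1}|>|M_{1,2}|$ for the first row, $|M_{i,i}|>|M_{i,i-1}|+|M_{i,i+1}|$ for $2\le i\le N^m-2$, and $|M_{N^m-1,N^m-1}|>|M_{N^m-1,N^m-2}|$ for the last row. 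By nonnegativity of the entries, the absolute values may be dropped, so these are precisely the three signed inequalities
\[
M_{1,1}-M_{1,2}>0,\qquad M_{i,i}-M_{i,i-1}-M_{i,i+1}>0\ \ (2\le i\le N^m-2),\qquad M_{N^m-1,N^m-1}-M_{N^m-1,N^m-2}>0.
\]
Invoking the identities \eqref{eq:invertibility-condition2} — obtained by expanding \eqref{eq:M_entry} and regrouping, using the definitions \eqref{eq:invertibility-condition1} of $p_1,\dots,p_4$ — the left-hand sides of these inequalities are exactly the quantities appearing in the statement of the proposition, which are assumed positive. Hence $\mathbf M$ is strictly diagonally dominant, and therefore invertible.

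With $\mathbf M$ invertible, \eqref{eq:IVP} is equivalent to $\mathbf w''=-\mathbf M^{-1}\mathbf K\mathbf w$ with the given initial data, a linear constant‑coefficient second‑order ODE system, which has a unique (indeed smooth) solution $\mathbf w(t)$; in particular each $\beta_j=w_j\in C^2(0,T)$, so the conclusions of Proposition~\ref{P:M-invertible} hold. I do not expect a genuine obstacle here; the only point requiring attention is the elementary but slightly fiddly bookkeeping behind \eqref{eq:invertibility-condition2}. Note that $p_2$ and $p_3$ are themselves \emph{not} nonnegative on $[a,b]$ (for instance $p_2(x)<0$ for $x<(a+b)/2$), so one cannot argue entrywise; the real content of the hypothesis is that the relevant \emph{sums} of integrals $\int p_2\,d\mu\circ T_{J_i}+\int p_3\,d\mu\circ T_{J_{i+1}}$ (and their analogues for the first and last rows) remain positive. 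Beyond this sign accounting the proof is immediate.
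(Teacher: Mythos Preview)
Your argument is correct and is exactly the approach the paper has in mind: the paper sets up the identities \eqref{eq:invertibility-condition2} precisely so that the hypotheses become the row-by-row strict diagonal dominance conditions, and then states the proposition without writing out a proof. Your observation that all entries of $\mathbf M$ are nonnegative (so the absolute values in \eqref{eq:3.1} disappear) is the one point the paper leaves implicit, and you have handled it correctly.
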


%\begin{comment}
For the infinite Bernoulli convolution associated with the golden ratio, as well as the 3-fold convolution of the Cantor measure (see Section~\ref{S:Frac}), we can verify that $\mathbf M$ is strictly diagonally dominant; we omit the details.
%\end{comment}

Next, we discuss the solution of the linear system \eqref{eq:matrixform2}. We let $\mathbf{w}_n:=\mathbf{w}(t_n)$, $n\geq -1$, and use the {\it central difference method} to solve the IVP (\ref{eq:IVP}). (The value of $\mathbf{w}_{-1}$ is defined  below.)

We approximate the derivatives as follows:
\begin{equation}\label{eq:matrixform3}
\frac{d^2\mathbf{w}(t_n)}{dt^2}
\approx\frac{\mathbf{w}_{n+1}-2\mathbf{w}_n+\mathbf{w}_{n-1}}{(\Delta t)^2}\qquad\text{and}\qquad
\mathbf{w}{'}(t_n)\approx \frac{\mathbf{w}_{n+1}-\mathbf{w}_{n-1}}{2\Delta t}.
\end{equation}
Substituting (\ref{eq:matrixform3}) into  (\ref{eq:matrixform2}) yields
\begin{equation*}
\frac{\mathbf{w}_{n+1}-2\mathbf{w}_n+\mathbf{w}_{n-1}}{(\Delta t)^2}=-\mathbf{M^{-1}}\mathbf K\mathbf{w}_n,\quad\text{i.e.,}\quad\mathbf{w}_{n+1}=(2\mathbf{I}-(\Delta t)^2\mathbf{M}^{-1}\mathbf{K})\mathbf{w}_n-\mathbf{w}_{n-1}.
\end{equation*}
Moreover, using
\begin{equation*}
 \mathbf{w}_{1}=(2\mathbf{I}-(\Delta t)^2\mathbf{M}^{-1}\mathbf{K})\mathbf{w}_0-\mathbf{w}_{-1}\qquad\text{and}\qquad
 \mathbf w'_0= \frac{\mathbf{w}_{1}-\mathbf{w}_{-1}}{2\Delta t},
    \end{equation*}
we get
\begin{equation}\label{eq:w_1}
 \mathbf{w}_{1}=\Big(\mathbf{I}-\frac{(\Delta t)^2}{2}\mathbf{M}^{-1}\mathbf{K}\Big)\mathbf{w}_0+(\Delta t)\mathbf w'_0.
\end{equation}
Therefore, equation (\ref{eq:matrixform2}) becomes:
\begin{equation}\label{eq:IVP-CDM}
\left\{
        \begin{aligned}
\mathbf{w}_{n+1}&=(2\mathbf{I}-(\Delta t)^2\mathbf{M}^{-1}\mathbf{K})\mathbf{w}_n-\mathbf{w}_{n-1},\quad n=1,2,\dots \\
\mathbf{w}_0&=\mathbf{w}(t_0)=\mathbf{w}(0) \\
\mathbf{w}_1&=\mathbf{w}(t_1)=\Big(\mathbf{I}-\frac{(\Delta t)^2}{2}\mathbf{M}^{-1}\mathbf{K}\Big)
\mathbf{w}_0+(\Delta t)\mathbf w'_0\\
t_n&=n\Delta t.
\end{aligned} \right.
\end{equation}

To solve this system, we fix $\Delta t$ and substitute the initial conditions $\mathbf w_0$ and $\mathbf w_0'$ from \eqref{eq:initial_cond} into \eqref{eq:w_1} to get $\mathbf w_1$.
Then substitute $\mathbf w_0$
and $\mathbf w_1$ into the first equation in \eqref{eq:IVP-CDM} to find $\mathbf w_2$. $\mathbf w_{n+1}$ can then be computed recursively.

\begin{comment}
{\color{red}{\bf Do we know that the solutions $\mathbf w_n$ converge to the solution $\mathbf w(t_n)$ of the matrix system \eqref{eq:IVP}? Is it in standard numerical analysis books?}}
\end{comment}

\begin{comment}
\noindent{\it Method 2.} We transform the second-order system of ODEs to an equivalent first-order system.

Let $\mathbf y(t)=\mathbf w'(t)$, and thus $\mathbf y'(t)=\mathbf w''(t)$, and let
$$
\mathbf{Y(t)}=\left[
\begin{array}{ll} \mathbf{y(t)} \\ \mathbf{y'(t)} \end{array} \right
].
$$
Then (\ref{eq:IVP}) becomes the following equivalent first-order system:
\begin{equation}\label{eq:IVP-1ST}
\mathbf{Y'(t)}=\mathbf{A} \mathbf{Y(t)} , \quad
\mathbf{Y(t_0)}=\begin{bmatrix} \mathbf{w(t_0)} \\ \mathbf{w'(t_0)} \end{bmatrix}, \quad
\mathbf{A}=\begin{bmatrix}
\mathbf{0}&\mathbf{I}\\
-\mathbf{M}^{-1}\mathbf{K}&\mathbf{0}
\end{bmatrix}.
\end{equation}
This system can be solved by using standard ODE methods.
\end{comment}

\section{Fractal measures defined by iterated function systems}\label{S:Frac}
\setcounter{equation}{0}

In this section, we solve the homogeneous IBVP \eqref{eq:IBVP} numerically for three different measures, namely, a weighted Bernoulli-type measure, the infinite Bernoulli convolution associated with the golden ratio, and the 3-fold convolution of the Cantor measure.
The first one is defined by a p.c.f. IFS, while the second and third are defined by IFSs with overlaps.

We assume the same hypotheses of Section~\ref{S:FEM}.
In order to solve \eqref{eq:IVP} or \eqref{eq:IVP-CDM}, we need to compute the matrix $\mathbf{M}$ (the matrix $\mathbf{K}$ can be computed easily). According to Lemma~\ref{P:M_determine_integrals}, it suffices to compute
the integrals $\I_{k,j}$, $k=0,1,2$, $j=1,\dots, N$, as defined in \eqref{eq:I_J}. We find the exact values of these integrals for the measures in this section. The following integration formula will be used repeatedly: for any continuous function $\varphi$ on $\text{supp}(\mu)=[a,b]$,
\begin{equation}\label{eq:int_wrt_ssm}
\int_a^b \varphi\,d\mu = \sum_{i=1}^q p_i\int_a^b\varphi\circ S_i\,d\mu.
\end{equation}
By substituting the values of $\I_{k,j}$ into \eqref{eq:M_entry}, we obtain the matrix $\mathbf M$.
This allows us to solve equation \eqref{eq:IVP-CDM}.

\subsection{Weighted Bernoulli-type %3
measure}
A weighted Bernoulli-type % 4
measure $\mu$ is defined by the IFS
$$
S_1(x)=\frac{1}{2} x,\qquad S_2(x)=\frac{1}{2}x+\frac{1}{2},
$$
together with probability weights $p,1-p$. Thus,
$$ \mu= p\mu\circ S_1^{-1}
          + (1- p)\mu\circ S_2^{-1}.
$$

For any Borel subset $A \subseteq [0,1],$
we have:
\begin{equation*}
\left[ \begin{array}{c}
        \mu(S_1S_iA)\\
        \mu(S_2S_iA)\\
\end{array}
\right] =M_i\left[\begin{array}{c}
        \mu(S_1A)\\
        \mu(S_2A)\\
\end{array}
\right], \quad i=1,2,
\end{equation*}
where
\begin{displaymath}
                M_1=\left[\begin{array}{cc}
                 p      & 0 \\
                 0 &    p
                \end{array}
                \right]\qquad\text{and}\qquad
                M_2=\left[\begin{array}{cc}
                 1-p      & 0\\
                 0 & 1-p \\
                 %0 & (1-p)^2 & 0
                 \end{array}
                \right].\
\end{displaymath}

Let $J=j_1j_2\cdots j_m$, $j_i=1$ or $2$. Then
      \begin{equation*}
          \mu(S_JA) = c_J \left[\begin{array}{c}
              \mu(S_1A)\\
              \mu(S_2A)
          \end{array}\right],\quad\text{where}\quad c_J=\emph{\textbf{e}}_{j_1}M_{j_2}\cdots M_{j_m}=(c_J^1,c_J^2).
      \end{equation*}
Since the IFS satisfies the open set condition, it is straightforward to evaluate the integrals $\I_{k,j}$; we omit the details. In view of \cite{Bird-Ngai-Teplyaev_2003}, we choose the weight
$p=2-\sqrt{3}$ in Figure~\ref{fig:weighted_Lebesgue_measure1}.

\begin{figure}[h]
\centering \mbox{
      {\epsfig{file=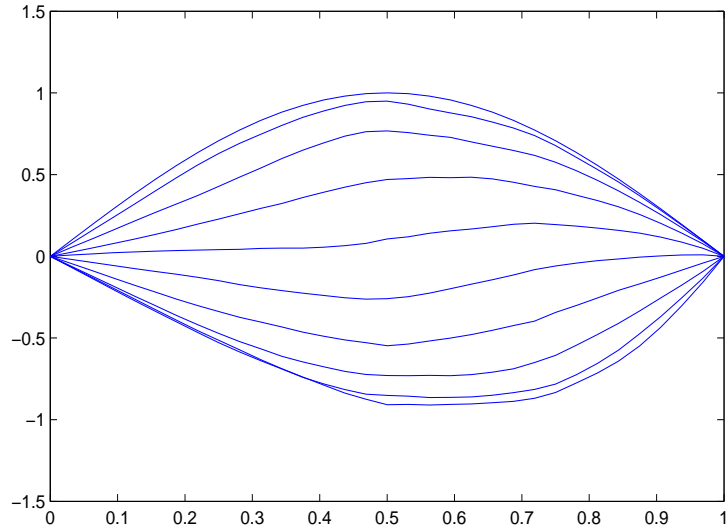,
      width=0.70\textwidth, height=0.35\textheight}}
      }
\caption{The weighted Bernoulli-type % 5
measure associated with the weights $p=2-\sqrt{3}$ and
$1-p= \sqrt{3}-1$. The initial data $g=\sin(\pi x)$ and $h=0$ are used, and the time step $\Delta t$ in equation~\eqref{eq:IVP-CDM} is taken to be $0.001$. From top to bottom, the values of $t$ are $0.0, 0.1, 0.2, 0.3, 0.4, 0.5, 0.6, 0.7, 0.8, 0.9$. Animations for this and other graphs in the paper are created and uploaded to the webpage \url{http://homepages.uconn.edu/fractals/wave/.}}
 \label{fig:weighted_Lebesgue_measure1}
\end{figure}

\subsection{Infinite Bernoulli convolution associated with the golden ratio}

$\linebreak$
The infinite Bernoulli convolution associated with the golden ratio
is defined by the IFS
$$
S_1(x)=\rho x,\qquad S_2(x)=\rho x+(1-\rho),\qquad \rho=\frac{\sqrt
5-1}{2}.
$$
\begin{center}
  \begin{picture}(130,65)(0,15)
      \unitlength=.5mm

      \put(0,50){\line(1,0){100}}
      \put(0,50){\line(0,1){1}}
      \put(33.3,50){\line(0,1){1}}
      \put(66.7,50){\line(0,1){1}}
      \put(100,50){\line(0,1){1}}
      \put(-1,42){$0$}
      \put(99,42){$1$}

      \put(33.3,41){\vector(-1,-2){7}}
      \put(19,35){$S_1$}

      \put(66.7,41){\vector(1,-2){7}}
      \put(72,35){$S_2$}

      \put(0,20){\line(1,0){60.3}}
      \put(0,20){\line(0,1){1}}
      \put(60.3,20){\line(0,1){1}}

      \put(39.7,14){\line(1,0){60.3}}
      \put(39.7,14){\line(0,1){1}}
      \put(100,14){\line(0,1){1}}

      \put(30.2,7){$1-\rho$}
      \put(58.7,7){$\rho$}
      \put(0,7){$0$}
      \put(100,7){$1$}

  \end{picture}
\end{center}

\vspace{2ex}
For each $0<p<1 $, we call the corresponding self-similar measure
$$
\mu=p\mu\circ S_1^{-1}+(1-p)\mu\circ S_2^{-1},
$$
a \textit{weighted infinite Bernoulli convolution associated with the
golden ratio}. If $p=1/2$, we get the classical one.

The measure $\mu_p$ satisfies a family of second-order identities. This was
first pointed out by Strichartz et al.
\cite{Strichartz-Taylor-Zhang_1995}. Define
$$
T_1(x)= \rho^2 x,\qquad T_2(x)= \rho^3 x+\rho^2,\qquad T_3(x)= \rho^2
x+\rho.
$$
Then $\mu$ satisfies the following second-order identities (see \cite{Lau-Ngai_2000}): for any Borel subset $A \subseteq [0,1],$
\begin{equation*}
\left[ \begin{array}{c}
        \mu(T_1T_iA)\\
        \mu(T_2T_iA)\\
        \mu(T_3T_iA)
\end{array}
\right] =M_i\left[\begin{array}{c}
        \mu(T_1A)\\
        \mu(T_2A)\\
        \mu(T_3A)
\end{array}
\right], \quad i=1,2,3,
\end{equation*}
where $M_1, M_2, M_3$ are, respectively,
$$
\begin{bmatrix}
p^2      &  0     & 0\\
                 (1-p)p^2 & (1-p)p & 0\\
                 0        &  1-p   & 0
\end{bmatrix},\quad
\begin{bmatrix}
0 & p^2     & 0\\
                 0 & (1-p)p  & 0\\
                 0 & (1-p)^2 & 0
\end{bmatrix},\quad
\begin{bmatrix}
 0 &  p     & 0       \\
                 0 & (1-p)p & (1-p)^2p\\
                 0 &  0     & (1-p)^2
\end{bmatrix}.
$$
We can make use of this to compute the measure of suitable subintervals of $[0,1]$. In fact, if we let $J=j_1\cdots j_m$, $j_i=1,2$ or $3$, then for any Borel subset $A\subseteq [0,1]$,
      \begin{equation*}
          \mu(T_JA) = c_J \left[\begin{array}{c}
              \mu(T_1A)\\
              \mu(T_2A)\\
              \mu(T_3A)
          \end{array}\right],\quad\text{where}\quad c_J=\emph{\textbf{e}}_{j_1}M_{j_2}\cdots M_{j_m}=(c_J^1,c_J^2,c_J^3).
      \end{equation*}
Moreover, by using \eqref{eq:int_wrt_ssm}
we can evaluate the integrals $\I_{k,j}$ in \eqref{eq:I_J}. For  $p=1/2$, the results are summarized below:
\begin{equation}\label{eq:measure-Bernoulli}
\begin{aligned}
&   \int_0^1      \,d\mu\circ T_{1}=\frac{1}{3}\
& & \int_0^1      \,d\mu\circ T_{2}=\frac{1}{3}\
& & \int_0^1      \,d\mu\circ T_{3}=\frac{1}{3}\\
&   \int_0^1 x    \,d\mu\circ T_{1}=\frac{1}{6(3\rho-1)}\
& & \int_0^1 x    \,d\mu\circ T_{2}=\frac{1}{6}\
& & \int_0^1 x    \,d\mu\circ T_{3}=\frac{1}{6(3\rho^{2}+3)}\\
&   \int_0^1 x^{2}\,d\mu\circ T_{1}=\frac{5\rho+4}{6(\rho+8)}\
& & \int_0^1 x^{2}\,d\mu\circ T_{2}=\frac{\rho+5}{6(\rho+8)}\
& & \int_0^1 x^{2}\,d\mu\circ T_{3}=\frac{2-\rho}{6(\rho+8)}.
\end{aligned}
\end{equation}
We can thus calculate the entries of the mass matrix $\mathbf{M}$ and solve the linear system \eqref{eq:IVP}. The result is shown in Figure~\ref{fig:Infinite Bernoulli convolution}.
\begin{figure}[h]
\centering \mbox{
      {\epsfig{file=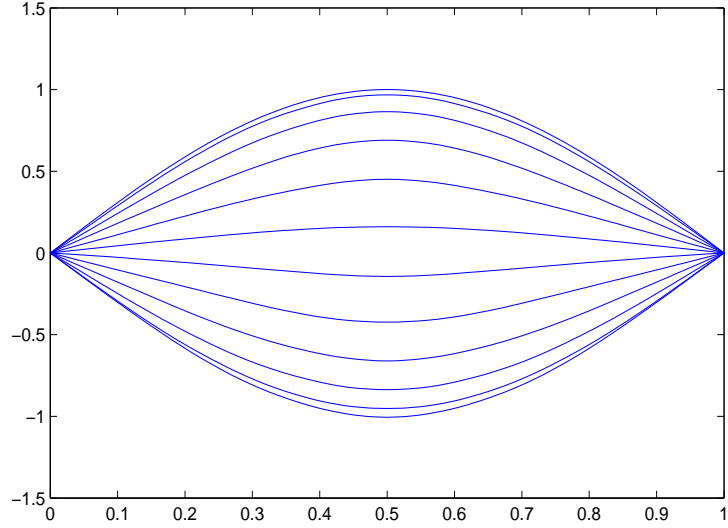,
      width=0.70\textwidth, height=0.35\textheight}}
      }
\caption{Infinite Bernoulli
convolution associated with the golden ratio. The initial data $g=\sin(\pi x)$ and $h=0$ are used. The time step $\Delta t$ in equation \eqref{eq:IVP-CDM} is taken to be $0.001$.
From top to bottom, the values of $t$ are $0.0, 0.1, 0.2, 0.3, 0.4, 0,5, 0.6, 0,7, 0.8, 0.9, 1.0, 1.1$. }
 \label{fig:Infinite Bernoulli
convolution}
\end{figure}

\subsection{3-fold convolution of the Cantor measure}

The 3-fold convolution of the Cantor measure $\mu$ also satisfies a family of
second-order identities. It is defined by the IFS
\begin{equation*}
    S_i(x)=\frac{1}{3}x+\frac{2}{3}(i-1),\quad \mathrm{for} \quad
    i=1,2,3,4,
\end{equation*}
which does not satisfy the OSC.
\begin{center}
  \begin{picture}(130,65)(0,15)
      \unitlength=.5mm

      \put(0,50){\line(1,0){100}}
      \put(0,50){\line(0,1){1}}
      \put(33.3,50){\line(0,1){1}}
      \put(66.7,50){\line(0,1){1}}
      \put(100,50){\line(0,1){1}}
      \put(-1,42){$0$}
      \put(99,42){$3$}

      \put(20,41){\vector(-1,-2){8}}
      \put(8,35){$S_1$}

      \put(43,41){\vector(-1,-4){4}}
      \put(33,35){$S_2$}

      \put(57,41){\vector(1,-4){4}}
      \put(60,35){$S_3$}

      \put(80,41){\vector(1,-2){8}}
      \put(85,35){$S_4$}

      \put(0,20){\line(1,0){33.3}}
      \put(0,20){\line(0,1){1}}
      \put(33.3,20){\line(0,1){1}}

      \put(22.2,17){\line(1,0){33.3}}
      \put(22.2,17){\line(0,1){1}}
      \put(55.7,17){\line(0,1){1}}

      \put(44.4,14){\line(1,0){33.3}}
      \put(77.7,14){\line(0,1){1}}
      \put(44.4,14){\line(0,1){1}}

      \put(66.7,11){\line(1,0){33.3}}
      \put(66.7,11){\line(0,1){1}}
      \put(100,11){\line(0,1){1}}

      \put(32.2,4){$1$}
      \put(65.7,4){$2$}
      \put(0,4){$0$}
      \put(100,4){$3$}

  \end{picture}
\end{center}

\vspace{3ex}
The measure $\mu$ satisfies the following self-similar identity:
$$ \mu= \frac{1}{8}\mu\circ S_1^{-1}
          + \frac{3}{8}\mu\circ S_2^{-1}
          +\frac{3}{8}\mu\circ S_3^{-1}
          + \frac{1}{8}\mu\circ S_4^{-1}
$$

Define
$$
T_1(x)= \frac{1}{3} x, \quad T_2(x)= \frac{1}{3} x+1,\quad T_3(x)=
\frac{1}{3} x+2.
$$
Then $\mu$ satisfies the following second-order identities (see \cite{Lau-Ngai_2000}): for any Borel subset
$A\subseteq[0,3]$,
        \begin{equation*}
            \left[\begin{array}{c}
                \mu(T_{1j}A)\\
                \mu(T_{2j}A)\\
                \mu(T_{3j}A)
            \end{array}\right]
            = M_j \left[\begin{array}{c}
                \mu(T_1A)\\
                \mu(T_2A)\\
                \mu(T_3A)
            \end{array}\right],
            \quad j=1,2,3,
        \end{equation*}
where the coefficient matrices $M_j$ are given by
        $$
            M_1=\frac{1}{8}\left[
                \begin{array}{ccc}
                    1&0&0\\
                    0&3&0\\
                    1&0&3
                \end{array}
                \right],\quad
            M_2=\frac{1}{8}\left[
                \begin{array}{ccc}
                    0&1&0\\
                    3&0&3\\
                    0&1&0
                \end{array}
                \right],\quad
            M_3=\frac{1}{8}\left[
                \begin{array}{ccc}
                    3&0&1\\
                    0&3&0\\
                    0&0&1
                \end{array}
                \right].
        $$
Let $J=j_1\cdots j_m$, $j_i=1,2$ or $3$. Then
      \begin{equation*}
          \mu(T_JA) = c_J \left[\begin{array}{c}
              \mu(T_1A)\\
              \mu(T_2A)\\
              \mu(T_3A)
          \end{array}\right],\quad\text{where}\quad c_J=\emph{\textbf{e}}_{j_1}M_{j_2}\cdots M_{j_m}=(c_J^1,c_J^2,c_J^3).
      \end{equation*}

The integrals $\I_{k,j}$ in \eqref{eq:I_J} are given below:
\begin{equation}\label{eq:measure-Cantor}
\begin{aligned}
&  \int_0^3       \,d \mu \circ T_{1}=\frac{1}{5}\
& &\int_0^3       \,d \mu \circ T_{2}=\frac{3}{5}\
& &\int_0^3       \,d \mu \circ T_{3}=\frac{1}{5} \\
&  \int_0^3 x     \,d \mu \circ T_{1}= \frac{27}{70}\
& &\int_0^3 x     \,d \mu \circ T_{2}=\frac{9}{10}\
& &\int_0^3 x     \,d \mu \circ T_{3}=\frac{3}{14}\\
&  \int_0^3 x^{2} \,d \mu \circ T_{1}= \frac{5517}{6440}\
& &\int_0^3 x^{2} \,d \mu \circ T_{2}=\frac{11943}{6440}\
& &\int_0^3 x^{2} \,d \mu \circ T_{3}= \frac{63}{184}.
\end{aligned}
\end{equation}

Again, using these values we can compute $\mathbf{M}$ and solve \eqref{eq:IVP} (see Figure~\ref{fig:3-fold convolution}).

\begin{figure}[h]
\centering \mbox{
      {\epsfig{file=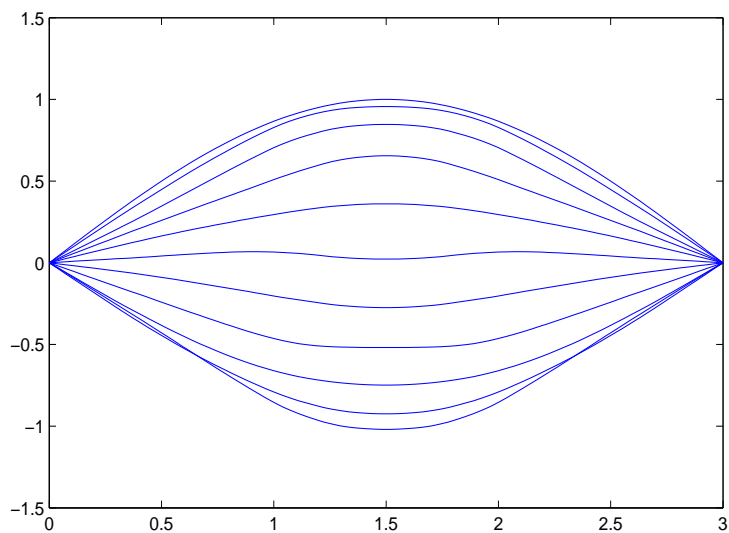,
      width=0.70\textwidth, height=0.35\textheight}}
      }
\caption{Three-fold convolution of
the Cantor measure. The initial data $g=\sin(\pi x/3)$ and $h=0$ are used, and $\Delta t=0.001$.
From top to bottom,
the values of $t$ are $0.0, 0.2, 0.4, 0.6, 0.8, 1.0, 1.2, 1.4, 1.6, 1.8, 2.0$.
}
 \label{fig:3-fold convolution}
\end{figure}

\section{Convergence of numerical approximations}\label{S:convergence}
\setcounter{figure}{0}

In this section we prove the convergence of the numerical approximations of the homogeneous IBVP \eqref{eq:IBVP}. Some of our results are obtained by modifying similar ones in \cite{Strang-Fix_1973} (see also \cite{Chen-Ngai_2010}).

We assume the same setup of Section~\ref{S:FEM} unless stated otherwise.
Let $V_m$ be the set of end-points of all the level-$m$ intervals, and arrange its elements
so that $V_m=\{x_i:i=0,1,\ldots, N^m\}$ with $x_i < x_{i+1}$ for
$i=0,1,\dots,N^m-1$, $x_0=a$ and $x_{N^m}=b$. Let $S^m$ be the space of continuous
piecewise linear functions on $[a,b]$ with nodes $V_m$, and let
$$
S_D^m:=\{u\in S^m:u(a)=u(b)=0\}
$$
be the subspace of $S^m$ consisting of functions satisfying the Dirichlet boundary
condition. Then
$$
\dim S^m=\#V_m=N^m+1\qquad\text{and}\qquad \dim S_D^m =\#V_m - 2 = N^m - 1.
$$

We choose the basis of $S^m$ consisting of the tent functions $\{\phi_i\}_{i=0}^{N^m}$
defined in (\ref{eq:FEM-basis-fn}) and choose the basis $\{\phi_i\}_{i=1}^{N^m-1}$ for $S_D^m$.

\begin{defi}\label{D:Rayleigh-Ritz}
Let $V_m$ be defined as above and $\{\phi_i\}_{i=0}^{N^m}$ be
defined as in (\ref{eq:FEM-basis-fn}). The linear map $\PP_m : \Dom\E \rightarrow S_D^m$ defined by
$$
\PP_m v:= \sum_{i=1}^{N^m-1}v(x_i)\phi_i(x),\quad v\in \Dom\E,
$$
is called the {\rm Rayleigh-Ritz projection} with respect to $V_m$.
\end{defi}

$\PP_m v$ is the piecewise linear interpolant of the values of $v$ on $V_m$. 

\begin{comment}The following lemma is well known.

\begin{lemma}\label{inequality1}
Let $v$ be an absolutely continuous function on $[a,b]$ that belongs to $H_0^1(a,b)$. Then
$$
\left| v(x)-v(y) \right|\leq \left|x-y\right|^{1/2}\left\| v \right\|_{H_0^1(a,b)}\quad\text{for all }x,y\in[a,b].
$$
\end{lemma}

\begin{proof}
Using the absolute continuity of $v$, we get, for all $x,y \in[a,b],$
$$
\left| v(x)-v(y) \right| =\left| \int_y^x v'(s)\,ds \right|\leq \left|x-y\right|^{1/2}\left\| v \right\|_{H_0^1(a,b)}.
$$
\end{proof}
\end{comment}

\begin{lemma}\label{P:Rayleigh-Ritz-1}
For any $m\ge 1$, let $V_m$ and $\PP_m$  be the Rayleigh-Ritz projection defined as in Definition~\ref{D:Rayleigh-Ritz}. Then for any $v\in\Dom\E$,
$\PP_m v$ is the component of $v$ in the subspace $S_D^m$, $v-\PP_m v$ vanishes on the boundary $\{a,b\}$, and
\begin{equation*}
    \E(v-\PP_m v, w)=0
    \quad\text{for all }w\in S_D^m.
\end{equation*}
\end{lemma}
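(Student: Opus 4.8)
The plan is to establish the three assertions in turn; the only one with real content is the Galerkin orthogonality $\E(v-\PP_m v,w)=0$, which in one space dimension holds \emph{exactly} (not merely approximately) because the derivative of a continuous piecewise linear function is piecewise constant.

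First I would invoke the standing hypothesis $\supp(\mu)=[a,b]$, under which $\Dom\E=H_0^1(a,b)$, so that every $v\in\Dom\E$ admits a continuous representative (by the pointwise-convergence Proposition of Section~\ref{S:prelim} and Corollary~\ref{Cor:continous-representative}); this representative vanishes at $a$ and $b$. Consequently the nodal values $v(x_i)$, $0\le i\le N^m$, are meaningful and $v(x_0)=v(x_{N^m})=0$. Since $\PP_m v=\sum_{i=1}^{N^m-1}v(x_i)\phi_i$ and each $\phi_i$ with $1\le i\le N^m-1$ vanishes at $a$ and $b$, we get immediately that $\PP_m v\in S_D^m$ and that $v-\PP_m v$ vanishes on $\{a,b\}$. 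I would also record the interpolation identity $(\PP_m v)(x_k)=v(x_k)$ for all $0\le k\le N^m$: for $1\le k\le N^m-1$ this is the nodal-basis property $\phi_i(x_k)=\delta_{ik}$, and for $k=0$ and $k=N^m$ both sides are $0$.

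The heart of the argument is the orthogonality relation. Fix $w\in S_D^m$ and set $z:=v-\PP_m v$, which is absolutely continuous on $[a,b]$. Since $w$ is piecewise linear, $w'$ is constant on each subinterval $(x_{k-1},x_k)$, say $w'\equiv c_k$ there, and so by the fundamental theorem of calculus
\[
\E(z,w)=\int_a^b z'w'\,dx=\sum_{k=1}^{N^m}c_k\int_{x_{k-1}}^{x_k}z'(x)\,dx=\sum_{k=1}^{N^m}c_k\bigl(z(x_k)-z(x_{k-1})\bigr).
\]
But $z(x_k)=v(x_k)-(\PP_m v)(x_k)=0$ at every node $x_k$ by the interpolation identity above, so every term vanishes and $\E(z,w)=0$, which is the third assertion.

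Finally, for the clause ``$\PP_m v$ is the component of $v$ in $S_D^m$'': because $\supp(\mu)=[a,b]$, the Poincar\'e inequality makes $\E$ an inner product on $\Dom\E=H_0^1(a,b)$ whose norm is equivalent to $\|\cdot\|_{H_0^1}$; hence the decomposition $v=\PP_m v+(v-\PP_m v)$ with $\PP_m v\in S_D^m$ and $v-\PP_m v$ $\E$-orthogonal to $S_D^m$ exhibits $\PP_m v$ as the $\E$-orthogonal projection of $v$ onto the finite-dimensional subspace $S_D^m$. I do not anticipate a genuine obstacle here; the only delicate point is the legitimacy of evaluating $v\in\Dom\E$ at the nodes and of asserting $v(a)=v(b)=0$, which is precisely what the continuous-representative results recalled in Section~\ref{S:prelim} supply.
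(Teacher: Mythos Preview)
Your argument is correct and is precisely the standard one-dimensional Galerkin-orthogonality proof. The paper does not actually supply its own proof of this lemma---it simply writes ``See, e.g., \cite{Strang-Fix_1973}''---so your self-contained derivation (piecewise-constant $w'$ together with $z(x_k)=0$ at all nodes) is exactly what that reference would provide.
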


\begin{proof} See, e.g., \cite{Strang-Fix_1973}.
\end{proof}

\begin{lemma} \label{th:projection} Assume the same hypotheses of Lemma~\ref{P:Rayleigh-Ritz-1}. Then for any $v\in\Dom\E$,
\begin{displaymath}
    v|_{V_m}=\PP_mv|_{V_m}.
\end{displaymath}
\end{lemma}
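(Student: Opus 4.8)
The plan is to reduce the identity to the nodal (Lagrange-interpolation) property of the tent basis. First I would fix the meaning of pointwise evaluation: for $v\in\Dom\E\subseteq H_0^1(a,b)$ the symbol $v|_{V_m}$ denotes the vector $\big(v_c(x_0),v_c(x_1),\dots,v_c(x_{N^m})\big)$, where $v_c$ is the continuous representative of $v$ (its existence and the fact that it lies in the relevant equivalence classes is exactly Corollary~\ref{Cor:continous-representative} and the Proposition preceding it in Section~\ref{S:prelim}). Since $v_c\in H_0^1(a,b)$, it vanishes at the endpoints, so $v_c(x_0)=v_c(a)=0$ and $v_c(x_{N^m})=v_c(b)=0$.

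Next I would record the elementary fact that the basis functions $\phi_i$ of \eqref{eq:FEM-basis-fn} satisfy $\phi_i(x_j)=\delta_{ij}$ for all $0\le i,j\le N^m$: by the piecewise-linear formulas, $\phi_i$ equals $1$ at $x_i$, equals $0$ at $x_{i-1}$ and $x_{i+1}$, and is supported in $[x_{i-1},x_{i+1}]$. Then evaluating $\PP_m v=\sum_{i=1}^{N^m-1}v(x_i)\phi_i$ at a node $x_k$ gives, for an interior node $1\le k\le N^m-1$, $(\PP_m v)(x_k)=\sum_{i=1}^{N^m-1}v(x_i)\phi_i(x_k)=v(x_k)$, while for the boundary nodes $k=0$ and $k=N^m$ we get $(\PP_m v)(x_k)=0=v_c(x_k)$ by the observation of the previous paragraph. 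Hence $\PP_m v$ and $v_c$ agree at every point of $V_m$, i.e. $v|_{V_m}=\PP_m v|_{V_m}$.

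There is essentially no hard part here; the only thing requiring care — and it is pure bookkeeping — is legitimizing the evaluation of a $\Dom\E$-function at the finitely many nodes of $V_m$, which is precisely the role of the continuous-representative machinery developed in Section~\ref{S:prelim}. Once that is in place, the claim is just the defining interpolation property of the tent (hat) finite element basis, and it also re-proves, in passing, the first assertion of Lemma~\ref{P:Rayleigh-Ritz-1} that $v-\PP_m v$ vanishes on $\{a,b\}$.
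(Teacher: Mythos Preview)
Your proof is correct and is exactly the natural argument: use the continuous representative to justify nodal evaluation, invoke $\phi_i(x_j)=\delta_{ij}$, and check the boundary nodes via $v_c(a)=v_c(b)=0$. The paper itself does not spell out a proof, merely citing \cite[Lemma~5.3]{Chen-Ngai_2010}, so your write-up is in fact more detailed than what appears here and follows the same (standard) route.
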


\begin{proof} Similar to that of \cite[Lemma 5.3]{Chen-Ngai_2010}.
\end{proof}

Let $\|V_m\|:=\max\{x_i-x_{i-1}:1\le i\le m\}$ denote the \textit{norm} of the partition $V_m$.

\begin{lemma}\label{th:convproj} Assume the same hypotheses of Lemma~\ref{P:Rayleigh-Ritz-1}
and let $v\in\Dom\E$.
Then
$$
\left|\PP_mv(x)-v(x)\right|\leq 2 \left\|V_m\right\|^{1/2}\|v\|_{\Dom\E}\quad\text{for all } x\in[a,b].
$$
In particular,
$$
\Vert \PP_mv-v\Vert_{\mu}\le 2\Vert V_m\Vert^{1/2}\Vert v\Vert_{\Dom\E}.
$$
\end{lemma}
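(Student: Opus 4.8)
The plan is to bound the sup-norm of the interpolation error $\PP_m v - v$ pointwise on each level-$m$ subinterval, then deduce the $L^2_\mu$ estimate by integrating. First I would fix $x \in [a,b]$ and locate the level-$m$ subinterval $[x_{j-1}, x_j]$ containing $x$. On this interval $\PP_m v$ is the affine function interpolating $v$ at the two endpoints, so $\PP_m v(x)$ is a convex combination of $v(x_{j-1})$ and $v(x_j)$; in particular $\PP_m v(x)$ lies between these two values, and hence
$$
|\PP_m v(x) - v(x)| \le |v(x_{j-1}) - v(x)| + |v(x_j) - v(x)| \le |v(x_{j-1}) - v(x_j)| + \text{(a similar term)},
$$
where one must be slightly careful: the cleanest route is to write $|\PP_m v(x) - v(x)| \le \max\{|v(x_{j-1}) - v(x)|,\, |v(x_j) - v(x)|\}$, since an affine interpolant of two values, evaluated at an interior point, stays within the smaller of the two gaps to $v(x)$ — actually within the interval spanned by $v(x_{j-1}), v(x_j)$, so the deviation from $v(x)$ is at most $\max(|v(x_{j-1})-v(x)|, |v(x_j)-v(x)|)$ only when $v(x)$ lies outside that span, and at most $|v(x_{j-1})-v(x_j)|$ otherwise. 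Either way, $|\PP_m v(x) - v(x)|$ is controlled by the oscillation of $v$ on $[x_{j-1}, x_j]$.

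Next I would estimate that oscillation using the Cauchy–Schwarz / absolute continuity argument already employed in the proof of the Proposition on continuous representatives (see equation \eqref{equality4} in the excerpt): for $y, z$ in the same level-$m$ subinterval,
$$
|v(y) - v(z)| = \Big| \int_z^y v'(s)\, ds \Big| \le \Big( \int_z^y |v'(s)|^2 \, ds \Big)^{1/2} |y - z|^{1/2} \le \|v\|_{\Dom\E}\, \|V_m\|^{1/2}.
$$
Applying this to the endpoints $x_{j-1}, x_j$ and to the pairs $(x_{j-1}, x)$, $(x_j, x)$ gives $|\PP_m v(x) - v(x)| \le 2\|V_m\|^{1/2} \|v\|_{\Dom\E}$, which is the first claimed inequality (the factor $2$ absorbs the two triangle-inequality terms comfortably). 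The second inequality then follows immediately: since $\mu$ is a finite measure supported on $[a,b]$ with total mass $\mu[a,b]$, and we may normalize so $\mu[a,b] = 1$ (or otherwise absorb the constant),
$$
\|\PP_m v - v\|_\mu^2 = \int_a^b |\PP_m v(x) - v(x)|^2 \, d\mu \le \big(2\|V_m\|^{1/2}\|v\|_{\Dom\E}\big)^2 \mu[a,b] \le 4\|V_m\|\,\|v\|_{\Dom\E}^2,
$$
whence $\|\PP_m v - v\|_\mu \le 2\|V_m\|^{1/2}\|v\|_{\Dom\E}$.

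The main obstacle — really the only subtle point — is the pointwise interpolation bound: one must argue carefully that the affine interpolant does not overshoot, so that its distance to $v(x)$ is genuinely controlled by the endpoint oscillations rather than by something larger. This is elementary (an affine function on $[x_{j-1}, x_j]$ takes values in the closed interval with endpoints $v(x_{j-1})$, $v(x_j)$), but it is the step where the constant $2$ is pinned down, and it is worth stating explicitly rather than hand-waving. Everything else is a direct reuse of the absolute-continuity estimate from the earlier Proposition together with finiteness of $\mu$; note that no regularity of $\mu$ beyond being a finite Borel measure on $[a,b]$ is needed for this lemma.
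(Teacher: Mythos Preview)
Your proposal is correct and follows essentially the same route as the paper: locate $x$ in a level-$m$ subinterval, use the absolute-continuity/Cauchy--Schwarz bound $|v(y)-v(z)|\le |y-z|^{1/2}\|v\|_{\Dom\E}$ on that subinterval, pick up the factor $2$ via a triangle inequality against an endpoint, and then integrate against the probability measure $\mu$ for the $L^2_\mu$ bound. The paper's decomposition is the clean one you eventually settle on, namely $|\PP_m v(x)-v(x)|\le |\PP_m v(x)-v(x_{i-1})|+|v(x_{i-1})-v(x)|\le |v(x_i)-v(x_{i-1})|+|v(x_{i-1})-v(x)|$; your surrounding discussion of alternative bounds is unnecessary but not wrong.
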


\begin{proof} We first note that since $v$ is absolutely continuous and belongs to $\Dom\E$,
\begin{equation}\label{eq:abs_cont_Holder}
\left| v(x)-v(y) \right| =\left| \int_y^x v'(s)\,ds \right|\leq \left|x-y\right|^{1/2}\left\| v \right\|_{H_0^1(a,b)},\quad\forall x,y\in[a,b].
\end{equation}

Now Let $i\in \{1,\dots,N^m\}$ so that $x\in [x_{i-1},x_i]$. Then by \eqref{eq:abs_cont_Holder} and Lemma~\ref{th:projection}, we get
\begin{equation*}
\begin{aligned}
|\PP_m v(x)-v(x)|&\le|\PP_m v(x)-v(x_{i-1})|+|v(x_{i-1})-v(x)|\\
&\le|v(x_i)-v(x_{i-1})|+|v(x_{i-1})-v(x)|\\
&\le 2\|V_m\|^{1/2}\left\|v \right\|_{\Dom\E}.
\end{aligned}
\end{equation*}
\end{proof}

Throughout the rest of this section we let
\begin{equation}\label{E:ghf_conv_pf}
g,h\in\Dom\E\quad\text{and}\quad f=0,
\end{equation}
and let $u$ be the solution of the corresponding homogeneous IBVP~\eqref{eq:IBVP}. According to Theorem~\ref{T:smoothness},
\begin{equation}\label{E:u_conv_pf}
u\in W_2^k(0,T;\Dom\E)\quad\text{for all }k\ge 0.
\end{equation}
In particular, $u_{tt}\in\Dom\E$ and
\begin{equation}\label{E:weak_solution_conv_pf}
(u_{tt},v)_\mu+\E(u,v)=0\quad\text{for all }v\in\Dom\E.
\end{equation}
As in Section~\ref{S:FEM}, we let
\begin{equation*}%\label{E:u^m_conv_pf}
u^m(x,t)=\sum_{i=1}^{N^m-1}\beta_i(t)\phi_i(x).
\end{equation*}
Lastly, we define
\begin{equation*}%\label{E:u^m_conv_pf}
e(x,t)=e^m(x,t):=\PP_m u(x,t)-u^m(x,t).
\end{equation*}

\begin{lemma}\label{approximation-estimate}
Let $g,h,f,u,u^m,e$ be defined as above.

\begin{enumerate}
\item[(a)] $u^m$ satisfies:
\begin{enumerate}
\item[(i)] $\displaystyle(u^m_{tt},v^m)_{\mu}+\E(u^m,v^m)=0\text{ for all } v^m \in S_D^m,$
    \medskip
\item[(ii)] $u^m(x,0)=\sum_{i=1}^{N^m-1} g(x_i)\phi_i(x)
\text{ and }\ u^m_t(x,0)=\sum_{i=1}^{N^m-1} h(x_i)\phi_i(x).$
\end{enumerate}

\medskip
\item[(b)] The following identity holds:
\begin{equation}\label{eq:e-identity}
(e_{tt},e_t)_{\mu}+\E(e,e_t)=(\PP_mu_{tt}-u_{tt},e_t)_{\mu}.
\end{equation}
\end{enumerate}
\end{lemma}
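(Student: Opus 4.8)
The plan is to verify parts (a) and (b) in turn, both being essentially bookkeeping once the right test functions are chosen. For part (a)(i), I would start from the defining relation \eqref{eq:PDE-weakform} of $u^m$, namely $\int_a^b u^m_{tt}\phi_i\,d\mu = -\int_a^b\nabla u^m\,\phi_i'\,dx$ for each $i=1,\dots,N^m-1$, which is exactly $(u^m_{tt},\phi_i)_\mu+\E(u^m,\phi_i)=0$. Since $\{\phi_i\}_{i=1}^{N^m-1}$ is a basis of $S_D^m$ and both $(\cdot,\cdot)_\mu$ and $\E$ are bilinear, taking linear combinations gives $(u^m_{tt},v^m)_\mu+\E(u^m,v^m)=0$ for all $v^m\in S_D^m$. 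For part (a)(ii), I would recall that $u^m(x,t)=\sum_{i=1}^{N^m-1}\beta_i(t)\phi_i(x)$ with $\beta_i(0)=w_i(0)=g(x_i)$ and $\beta_i'(0)=w_i'(0)=h(x_i)$, coming from the initial conditions \eqref{eq:initial_cond}; differentiating $u^m$ termwise in $t$ (legitimate since $\beta_i\in C^2$ by Proposition~\ref{P:M-invertible}) yields the two stated formulas. Here I am using that $\phi_i$ is $t$-independent, so $u^m_t=\sum\beta_i'\phi_i$.

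For part (b), the strategy is to write $e_{tt}=\PP_m u_{tt}-u^m_{tt}$ (again using that $\PP_m$ acts only on the spatial variable and commutes with $\partial_t$, and that $u\in W_2^k(0,T;\Dom\E)$ so $u_{tt}\in\Dom\E$ and $\PP_m u_{tt}$ makes sense), and similarly $e_t=\PP_m u_t-u^m_t\in S_D^m$ for a.e.\ $t$. Now compute
\[
(e_{tt},e_t)_\mu+\E(e,e_t)
=(\PP_m u_{tt}-u^m_{tt},\,e_t)_\mu+\E(\PP_m u-u^m,\,e_t).
\]
Since $e_t\in S_D^m$, part (a)(i) gives $(u^m_{tt},e_t)_\mu+\E(u^m,e_t)=0$, so the right-hand side collapses to $(\PP_m u_{tt},e_t)_\mu+\E(\PP_m u,e_t)$. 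Next I would use Lemma~\ref{P:Rayleigh-Ritz-1}: $\E(\PP_m u-u,w)=0$ for all $w\in S_D^m$, hence $\E(\PP_m u,e_t)=\E(u,e_t)$. Finally, since $e_t\in S_D^m\subset\Dom\E$, the weak-solution identity \eqref{E:weak_solution_conv_pf} gives $\E(u,e_t)=-(u_{tt},e_t)_\mu$. Substituting back,
\[
(e_{tt},e_t)_\mu+\E(e,e_t)=(\PP_m u_{tt},e_t)_\mu-(u_{tt},e_t)_\mu=(\PP_m u_{tt}-u_{tt},e_t)_\mu,
\]
which is \eqref{eq:e-identity}.

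The only genuinely delicate point is the interchange of $\PP_m$ (and of $\E$, which involves a spatial derivative) with the time derivative, i.e.\ justifying that $(\PP_m u)_{tt}=\PP_m(u_{tt})$ and that $\E(\PP_m u,\cdot)$ is differentiable in $t$ with the expected derivative; this rests on the regularity $u\in W_2^k(0,T;\Dom\E)$ from Theorem~\ref{T:smoothness} and the fact that $\PP_m:\Dom\E\to S_D^m$ is a bounded linear map into a finite-dimensional space, so it commutes with Bochner integration and hence with weak time differentiation. Everything else is linearity of $(\cdot,\cdot)_\mu$ and $\E$ together with the three already-established identities (a)(i), Lemma~\ref{P:Rayleigh-Ritz-1}, and \eqref{E:weak_solution_conv_pf}. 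I would state the commutation step explicitly but not belabor it.
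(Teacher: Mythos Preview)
Your proposal is correct and uses exactly the same three ingredients as the paper: part (a)(i), the Rayleigh--Ritz orthogonality of Lemma~\ref{P:Rayleigh-Ritz-1}, and the weak-solution identity \eqref{E:weak_solution_conv_pf}. The only cosmetic difference is the order of presentation for (b): the paper first subtracts \eqref{E:weak_solution_conv_pf} and (a)(i) with $v=v^m=e_t$ to obtain the Galerkin orthogonality $(u_{tt}-u^m_{tt},e_t)_\mu+\E(u-u^m,e_t)=0$, then inserts $\pm\PP_m u$ and applies Lemma~\ref{P:Rayleigh-Ritz-1}; you instead expand $(e_{tt},e_t)_\mu+\E(e,e_t)$ directly and peel off the same identities one at a time. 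Both routes are the same computation rearranged, and your remark on why $(\PP_m u)_{tt}=\PP_m(u_{tt})$ is exactly what the paper records in one line at the start of its proof of (b).
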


\begin{proof} (a) The proof of part (a) follows from the derivations in Section~\ref{S:FEM}; we omit the details.

(b) By definition and the fact that $u\in W_2^k(0,T;\Dom\E)$ for $k\ge 0$, the functions $e_t,e_{tt}$, and $(\PP_m u)_{tt}=\PP_mu_{tt}$ all belong to $\SSS_D^m$.

Substitute $e_t$ for $v$ in \eqref{E:weak_solution_conv_pf} and for $v^m$ in (a)(i), and then subtracting the resulting equations, we get
$$
(u_{tt}-u^m_{tt},e_t)_\mu+\E(u-u^m,e_t)=0.
$$
 Equivalently,
$$
(u_{tt}-\PP_m u_{tt}+\PP_m u_{tt}-u^m_{tt},e_t)_\mu +\E(u-\PP_m u+\PP_m u-u^m,e_t)=0,
$$
which implies
$$
(\PP_m u_{tt}-u^m_{tt},e_t)_{\mu}+\E(\PP_m u-u^m,e_t)=(\PP_m u_{tt}-u_{tt},e_t)_\mu,
$$
because $\E(u-\PP_m u,e_t)=0$ (Lemma~\ref{P:Rayleigh-Ritz-1}).
Identity \eqref{eq:e-identity} now follows from the definition of $e(t)$.
\end{proof}

\begin{theo}\label{th:convproj2}
Assume the same hypotheses of Lemma~\ref{approximation-estimate} and let $\rho$ be as in \eqref{eq:rho_def}. Then there exists a constant $C>0$ such that
 $$
 \left\|\PP_m u-u^m\right\|_{\mu}\leq  C \sqrt{T} \rho^{m/2}\left\|u_{tt}\right\|_{2,\Dom\E}.
 $$
\end{theo}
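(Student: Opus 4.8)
The plan is to carry out the classical energy estimate for the finite element error $e=\PP_m u-u^m$, driven by the identity \eqref{eq:e-identity} of Lemma~\ref{approximation-estimate}(b). First I would record the regularity available: since $h\in\Dom\E$ and $f=0$, Theorem~\ref{T:smoothness} gives $u\in W_2^k(0,T;\Dom\E)$ for every $k\ge 0$, so in particular $u_{tt}(\cdot,t)\in\Dom\E$ for a.e.\ $t$, the identity $(\PP_m u)_{tt}=\PP_m u_{tt}$ holds (apply $\PP_m$ nodewise), and, because $S_D^m$ is finite dimensional and $u^m$ solves a constant-coefficient linear ODE, the map $t\mapsto e(\cdot,t)$ is a smooth $S_D^m$-valued function. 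I would then introduce the energy
\begin{equation*}
G(t):=\tfrac12\|e_t(\cdot,t)\|_\mu^2+\tfrac12\,\E\big(e(\cdot,t),e(\cdot,t)\big)\ge 0,
\end{equation*}
which by the above is differentiable on $[0,T]$.

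Next I would differentiate and insert \eqref{eq:e-identity}: since $\E$ is bilinear and symmetric,
\begin{equation*}
G'(t)=(e_{tt},e_t)_\mu+\E(e,e_t)=(\PP_m u_{tt}-u_{tt},e_t)_\mu\le\varphi(t)\,\|e_t(\cdot,t)\|_\mu\le\varphi(t)\sqrt{2\,G(t)},
\end{equation*}
where $\varphi(t):=\|\PP_m u_{tt}(\cdot,t)-u_{tt}(\cdot,t)\|_\mu$ and I have used Cauchy--Schwarz in $L^2_\mu$. Applying Lemma~\ref{th:convproj} to $v=u_{tt}(\cdot,t)\in\Dom\E$, and noting that each level-$m$ subinterval $T_J[a,b]$ has length at most $\rho^m(b-a)$ by \eqref{eq:rho_def} (hence $\|V_m\|\le\rho^m(b-a)$), I would obtain
\begin{equation*}
\varphi(t)\le 2\|V_m\|^{1/2}\,\|u_{tt}(\cdot,t)\|_{\Dom\E}\le 2(b-a)^{1/2}\rho^{m/2}\,\|u_{tt}(\cdot,t)\|_{\Dom\E}.
\end{equation*}

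The last step is to integrate. From $G'\le\varphi\sqrt{2G}$ one gets $\frac{d}{dt}\sqrt{G+\ep}\le\varphi/\sqrt2$ for each $\ep>0$; integrating and letting $\ep\to0^+$, and using $G(0)=0$ — valid because $e(\cdot,0)=\PP_m g-\sum_i g(x_i)\phi_i=0$ and $e_t(\cdot,0)=\PP_m h-\sum_i h(x_i)\phi_i=0$ by Lemma~\ref{approximation-estimate}(a)(ii) and the definition of $\PP_m$ — gives
\begin{equation*}
\sqrt{G(t)}\le\frac1{\sqrt2}\int_0^T\varphi(s)\,ds\le\frac{\sqrt T}{\sqrt2}\Big(\int_0^T\varphi(s)^2\,ds\Big)^{1/2}\le\sqrt2\,(b-a)^{1/2}\sqrt T\,\rho^{m/2}\,\|u_{tt}\|_{2,\Dom\E},
\end{equation*}
by Cauchy--Schwarz in $s$ and the bound on $\varphi$. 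Finally, condition \eqref{eq:condition_C} yields $\|e(\cdot,t)\|_\mu^2\le C'\,\E(e(\cdot,t),e(\cdot,t))\le 2C'\,G(t)$, where $C'$ is the constant in \eqref{eq:condition_C}; combining the last two displays gives $\|\PP_m u-u^m\|_\mu=\|e(\cdot,t)\|_\mu\le C\sqrt T\,\rho^{m/2}\|u_{tt}\|_{2,\Dom\E}$ with $C=2\sqrt{C'(b-a)}$.

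The step needing the most care is the rigorous justification of the energy balance — that $e$ is temporally regular enough for $G$ to be absolutely continuous with $G'=(e_{tt},e_t)_\mu+\E(e,e_t)$, which rests on Theorem~\ref{T:smoothness} together with the fact that $\PP_m$ takes values in the finite-dimensional space $S_D^m$ — and the handling of the instants where $G(t)=0$, dealt with through the $\sqrt{G+\ep}$ regularization. The remaining manipulations are routine.
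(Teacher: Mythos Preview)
Your proof is correct and follows essentially the same energy-estimate approach as the paper: define the energy $\tfrac12\|e_t\|_\mu^2+\tfrac12\E(e,e)$, differentiate using \eqref{eq:e-identity}, bound the right side by Cauchy--Schwarz and Lemma~\ref{th:convproj}, integrate, and finish with \eqref{eq:condition_C}. The only cosmetic difference is that you handle the possible vanishing of $G$ via the $\sqrt{G+\ep}$ regularization, whereas the paper argues on subintervals where the energy is strictly positive; both are standard and lead to the same bound.
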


\begin{proof}
Let $E(t):=\frac{1}{2}(e_t,e_t)_{\mu}+\frac{1}{2}\E(e,e)=
\frac{1}{2}\left\|e_t\right\|^2_{\mu}+\frac{1}{2}\left\|e\right\|^2_{\Dom{\E}}.$
Then
\begin{equation}\label{combine-form4}
\left\|e_t\right\|_{\mu} \leq \sqrt{2} \sqrt{E(t)},
\end{equation}
\begin{equation}\label{combine-form5}
\left\|e\right\|_{\Dom{\E}} \leq \sqrt{2} \sqrt{E(t)},
\end{equation}
\begin{equation}\label{combine-form6}
E(t)\leq \frac{1}{2}\big(\left\|e_t\right\|_{\mu}+\left\|e\right\|_{\Dom{\E}}\big)^2.
\end{equation}

The left-hand side of \eqref{eq:e-identity} is equal to
\begin{equation}\label{eq:left-side-combine-form3}
\frac{1}{2}\big(\left\|e_t\right\|^2_{\mu}\big)_t+\frac{1}{2}\big(\left\|e\right\|^2_{\Dom{\E}}\big)_t = E_t(t).
\end{equation}
For the right-hand side of \eqref{eq:e-identity}, we apply Cauchy-Schwarz inequality and \eqref{combine-form4} to get
\begin{equation}\label{eq:right-side-combine-form3}
E_t(t)=(\PP_m u_{tt}-u_{tt},e_t)_\mu\le\Vert\PP_m u_{tt}-u_{tt}\Vert_\mu\left\|e_t\right\|_{\mu}
\leq\|\PP_m u_{tt}-u_{tt}\|_{\mu}\sqrt{2}\sqrt{E(t)}.
\end{equation}

Since $E(t)\ge 0$ with $E(0)=0$, we can assume that $E(t)>0$ on some interval $(\alpha,\beta)\subset [0,T]$ with $\alpha<\beta$ and $E(\alpha)=0$. (Otherwise, by the continuity $E(t)$, we have $E(s)=0$ for all $s\in[0,T]$ and (\ref{energy-form02}) below still holds.) It follows from \eqref{eq:right-side-combine-form3} that
$$
 \frac{E_t(t)}{\sqrt{E(t)}} \leq \sqrt{2}\|\PP_m u_{tt}-u_{tt}\|_{\mu},\quad\alpha<s<\beta,
$$
and thus
\begin{equation}\label{energy-form02}
 2 {\sqrt{E(s)}}   \leq  \sqrt{2} \int_\alpha^\beta \|\PP_m u_{tt}-u_{tt}\|_{\mu} \,dt,\quad\alpha\le s\le\beta.
\end{equation}
From \eqref{combine-form5} and \eqref{energy-form02}, we have
\begin{equation*}
\left\|e(s)\right\|_{\Dom{\E}} \leq  \sqrt{2} \sqrt{E(s)} \leq \sqrt{2}  \int_\alpha^\beta \big\|\PP_m u_{tt}-u_{tt}\|_{\mu}\,dt\le \sqrt{2}\int_0^T \big\|\PP_m u_{tt}-u_{tt}\big\|_{\mu} \,dt,
\end{equation*}
which actually holds for all $s\in[0,T]$. 
Thus by combining condition \eqref{eq:condition_C}, Lemma~\ref{th:convproj}, and the above estimations, we have
\begin{equation*}\label{energy-form04}
\begin{split}
\left\|e(s)\right\|_{\mu}
\leq & C\left\|e(s)\right\|_{\Dom{\E}}
\leq  C \sqrt{T}\Big(\int_0^T \left\|\PP_m u_{tt}-u_{tt}\right\|^2_{\mu} \,dt\Big)^{1/2}\\
\leq &C \sqrt{T} \Big(\int_0^T \big(2\Vert V_m\Vert^{1/2}\left\|u_{tt}\right\|_{\Dom\E}\big)^2\,dt\Big)^{1/2}
\qquad\text{(Lemma~\ref{th:convproj})} \\
\leq &2 C \sqrt{T}\rho^{m/2}\left\|u_{tt}\right\|_{2,\Dom\E},
\end{split}
\end{equation*}
which holds for all $s\in[0,T]$. This completes the proof.
\end{proof}

\begin{proof}[\textit{Proof of Theorem~\ref{T:main-3}}] For fixed $t\in [0, T]$,
\begin{equation*}\label{energy-form05}
\begin{split}
\left\|u^m-u\right\|_{\mu} \leq &\left\|u^m-\PP_m u\right\|_{\mu} +\left\|\PP_m u-u\right\|_{\mu}.
\end{split}
\end{equation*}
Theorem~\ref{T:main-3} now follows by combining Lemma~\ref{th:convproj} and Theorem~\ref{th:convproj2}.
\end{proof}

{\bf Acknowledgements.}~~The authors thank Scott Kersey, Yin-Tat Lee, Frederic Mynard, Po-Lam Yung, Shijun Zheng, and, especially, Robert Strichartz, for valuable discussions and suggestions.

\end{document}